\documentclass[11pt]{article}

\usepackage{amsmath}
\usepackage{amssymb}
\usepackage{amsfonts}
\usepackage{amsthm}

\usepackage{fullpage}
\usepackage[colorlinks=true,citecolor=blue]{hyperref}
\usepackage{multirow}

\newtheorem{theorem}{Theorem}[section]
\newtheorem{corollary}[theorem]{Corollary}
\newtheorem{lemma}[theorem]{Lemma}
\newtheorem{claim}[theorem]{Claim}

\newtheorem{definition}{Definition}

\def \RR {{\mathbb R}}
\def \F {{\mathcal{F}}}
\def \G {{\mathcal{G}}}

\title{On the Approximation of Submodular Functions}


\author{Nikhil R. Devanur\footnote{Microsoft Research, Redmond ~~~$\{ \text{\tt{nikdev, roysch, mohits}} \}${\tt @microsoft.com}} \and Shaddin Dughmi\footnote{University of Southern California~~~{\tt shaddin@usc.edu}} \and Roy Schwartz$^{*}$ \and Ankit Sharma\footnote{Carnegie Mellon University~~~{\tt ankits@cs.cmu.edu}} \and Mohit Singh$^{*}$}

\begin{document}
\maketitle

\begin{abstract}
Submodular functions are a fundamental object of study in combinatorial optimization, economics, machine learning, etc. and exhibit a rich combinatorial structure.
Many subclasses of submodular functions have also been well studied and these subclasses widely vary in their complexity. Our motivation is to understand the relative complexity of these classes of functions. Towards this, we consider the question of how well can one class of submodular functions be approximated by another (simpler) class of submodular functions. Such approximations naturally allow algorithms designed for the simpler class to be applied to the bigger class of functions. We prove both upper and lower bounds on such approximations.
Our main results are:
\begin{itemize}
\item General submodular functions\footnote{We additionally assume that the submodular function takes value 0 on the null set and the universe.} can be approximated by cut functions of directed graphs to a factor of $n^2/4$, which is tight.
\item General symmetric submodular functions$^{1}$ can be approximated by cut functions of undirected graphs to a factor of $n-1$, which is tight up to a constant.
\item Budgeted additive functions can be approximated by coverage functions to a factor of $e/(e-1)$, which is tight.
\end{itemize}
Here $n$ is the size of the ground set on which the submodular function is defined.
We also observe that prior works imply that monotone submodular functions can be approximated by coverage functions with a factor
between $O(\sqrt{n} \log n)$ and $\Omega(n^{1/3} /\log^2 n) $.

\end{abstract}

\section{Introduction}
Submodular optimization problems have been a rich area of research in recent years, motivated by the principle of diminishing marginal returns which is prevalent in real world applications. Such functions are ubiquitous in diverse disciplines, including economics, algorithmic game theory, machine learning, combinatorial optimization and combinatorics. While submodular function can be minimized efficiently, i.e., in polynomial time~\cite{GLS81,Schrijver:2000,IwataFF:2001},
many natural optimization problems over submodular functions are NP-hard, e.g., Max-$k$-Coverage~\cite{NemhauserWF:1978}, Max-Cut and Max-DiCut~\cite{GoemansW:1995}, and Max-Facility-Location~\cite{CornuejolsFN:1977a}. Consequently, many works,
 specifically in the setting of algorithmic game theory~\cite{BuchfuhrerSS10,Dughmi11,DughmiRY11,HoeferK12}, have explored \emph{simpler} subclasses of submodular functions for which the given algorithmic problem can still be well-approximated. Such subclasses of submodular functions have included cut functions of graphs, coverage functions of set systems, budgeted additive functions, matroid rank functions, etc.

Our work is motivated by the question, how complex can a submodular function be? Since this is such a fundamental question, it has been asked in different forms previously. Goemans et al.~\cite{Goemans:2009} consider how many queries to a submodular function are sufficient to infer the value of the function, approximately, at every point in the domain. Balcan and Harvey~\cite{Balcan:2010} focus on the problem of learning submodular functions in a probabilistic model; are few random queries enough to infer the value at almost all points in the domain? Badanidiyuru et al.~\cite{Badanidiyuru:2012} ask whether an approximate \emph{sketch} of a submodular function, or more generally a subadditive function, exists (i.e., can the function be represented in polynomial space)? Seshadri and Vondr{\'a}k~\cite{Seshadhri:2010} consider the testability of submodular functions: how many queries does it take to check whether a function is \emph{close} to being submodular?

We approach this question by noting that not all submodular functions are identically complex and some have been more amenable to optimization than others. Thus, one natural way to characterize the relative complexity of one class of submodular functions w.r.t another,
is to ask how well can a function in the first class be approximated by a function in the second.
Formally, we ask the following question. Given two classes of submodular functions $\F$ and $\G$ (typically $\G\subset \F$), what is the smallest $\theta$, such that for every $f\in \F$, there exists a $g\in \G$ such that $f(S)\leq g(S)\leq \theta \cdot f(S)$ for each $S\subseteq U$? Here class $\G$ would represent the class of submodular functions which are easier to optimize for some problem and class $\F$ would represent a bigger class  which we want to optimize over. We also note that this concept of approximation is not special to submodular functions and can be asked for any two classes of functions. We focus on submodular functions due to their ubiquitous nature in optimization.

Intuitively, this notion of approximation resembles the long and rich line of work that deals with the algorithmic applications of geometric embeddings, in which the goal is to embed hard metric spaces into simpler ones. Some successful examples include embedding general metrics into normed spaces~\cite{Bourgain:1985}, dimension reduction in a Euclidean space~\cite{JohnsonL:1984}  and the probabilistic embedding into ultrametrics~\cite{Bartal:1996,FakcharoenpholRT:2004}.
As in the metric case, a natural byproduct of the above approach is that if there exists an $\alpha$-approximation algorithm for any submodular function in $\G$, then there exists a $(\theta\cdot \alpha)$-approximation algorithm for all functions in $\F$. As an application of our approach, we show how to obtain an algorithm for the online submodular function maximization problem, for general monotone submodular functions~\cite{BuchbinderNRS12}. Previously, results were known for only certain subclasses of submodular functions; see Appendix~\ref{sec:application} for details.





\subsection{Our Results and Techniques}\label{sec:results}
We start by asking how well a general submodular function $f:2^{U}  \rightarrow \mathbb{R}_{+}$ (with the additional property that $f(\phi)=0=f(U)$) can be approximated by a function in the canonical simpler subfamily of non-symmetric submodular functions, cut function of a directed graph. We give matching upper and lower bounds for such an approximation (Theorem~\ref{them:General_DirectedCUT}). Next, we ask the same question for \emph{symmetric} submodular functions vis-a-vis its canonical simpler subfamily, cut functions of undirected graphs. In this case, we provide nearly matching upper and lower bounds (Theorem~\ref{them:Symmetric_UndirectedCUT}). We then move our attention to two subfamilies, budgeted additive functions and coverage functions, both of which,
 as already mentioned in the introduction, have received considerable interest in the algorithmic game theory setting. We show tight upper and lower bounds for approximating budgeted additive functions with coverage functions (Theorem~\ref{thrm:Bugdet_by_Coverage}). These results are summarized in Table~\ref{tab:results}.
 While previous works~\cite{Goemans:2009,Balcan:2010,Badanidiyuru:2012} studied the complexity of submodular functions from different perspectives, they do imply some additional results, both positive and negative, on the approximation of monotone submodular functions by simpler classes of submodular functions (as illustrated in Table~\ref{tab:results} and discussed in detail in Appendix~\ref{sec:monotone-by-budgeted}).

\begin{table}[t]
\centering
\renewcommand{\arraystretch}{1.5}
\begin{tabular}{|c|c|c|c|}
\hline
  \multirow{2}{*}{{\bf Input Class}} & \multirow{2}{*}{{\bf Output Class}} & {\bf Approximation} & {\bf Approximation} \\
  & & {\bf Upper Bound} & {\bf Lower Bound}\\
  \hline
 General Submodular
 & Cut Functions (directed) & $\frac{n^2}{4}$ & $\frac{n^2}{4}$\\
  \hline

  Symmetric Submodular
  & Cut functions (undirected) & ${n-1}$ &$\frac{n}{4}$\\
  \hline

  Budgeted Additive & Coverage & $\frac{e}{e-1}$ & $\frac{e}{e-1}$\\
\hline

Monotone Submodular & Coverage/Budgeted Additive & $O(\sqrt{n}\log n)$~\cite{Goemans:2009} & $\Omega(\frac{n^{1/3}}{\log^2{n}})$~\cite{Balcan:2010,Badanidiyuru:2012}\\

\hline

\hline

\end{tabular}
\vspace{2pt}
\caption{Our results are described in the first three rows. The results in the last row are either implicit in the references or follow as a corollary (Appendix~\ref{sec:monotone-by-budgeted}). When the output class is a cut function of a graph, we assume that the input function $f$ satisfies $f(\emptyset)=f(U)=0$, as every cut function must satisfy this constraint. Here, $n$ denotes the size of the ground set.}
\label{tab:results}
\vspace{-15pt}
\end{table}

Let us now briefly discuss the main techniques that we use to obtain our results. In contrast to previous works~\cite{Goemans:2009,Balcan:2010,Badanidiyuru:2012}, {\em arbitrary} submodular functions, as opposed to {\em monotone} submodular functions, present different challenges. As an illustration, for approximating a submodular function $f$ via a cut function of a graph $G$, consider the case when there is a non-trivial set $\emptyset \neq S\neq U$ for which $f(S)=0$. Then the weight of the cut $(S,\bar{S})$ in $G$ is forced to be  zero. Indeed, {\em all} sets $S$ with $f(S)=0$ must be in a correspondence with cuts in $G$ of value zero.
Thus, given a submodular function $f$, our construction of $G$ optimizes for the minimizers of the submodular function $f$. Surprisingly, this can be shown to give the best possible approximation. For a \emph{symmetric} submodular function $f$, we show that it suffices to use the cut function of a {\em tree} (as opposed to a general undirected graph) utilizing the Gomory-Hu tree representation~\cite{GomoryH61,Queyranne93} of $f$.

For approximating budgeted additive functions by coverage functions, we first give a randomized construction achieving an approximation factor of $e/(e-1)$. We then show that this is the best possible approximation factor  as characterized by a linear program. The  proof of the lower bound of $e/(e-1)$ uses linear programming duality and proceeds by presenting a feasible dual solution to the linear program achieving an objective value of $e/(e-1)$ in the limit. 
We would like to point out that all our results are algorithmic and the claimed approximations can be found in polynomial time given a value oracle for the submodular function.



\subsection{Related Work}

Goemans et al.~\cite{Goemans:2009} considered the problem of how well a given
monotone submodular function $f$ can be approximated when only polynomially many value oracle queries are permitted. They presented an approximation of $O\left( \sqrt{n}\log{n}\right)$, and an improved guarantee of $\sqrt{n+1}$ in the case that $f$ is a matroid rank function. Implicit in this algorithm and relevant to our setting is an approximation of all monotone submodular functions by budgeted additive functions  (Appendix~\ref{sec:monotone-by-budgeted}). The current best lower bound for the problem studied by~\cite{Goemans:2009} is given by Svitkina and Fleischer~\cite{SvitkinaF:2011} and is $\Omega (\sqrt{n/\log{n}})$.

Balcan and Harvey~\cite{Balcan:2010} take the learning perspective to the study of the complexity of submodular functions. They study the problem of probabilistically learning a monotone submodular function, given the values the function takes on a polynomial sized sample of its domain. They provide a lower bound of $\Omega (n^{1/3})$ on the best possible approximation a learning algorithm can give to the submodular function, even when it knows the underlying sampling distribution, and the submodular function to be learned is Lipschitz. Another result with this perspective is by Balcan et al.~\cite{BalcanHarveyIwata:2012} who show that a symmetric non-monotone submodular function can be approximated to within $\sqrt{n}$ by the square root of a quadratic function. Furthermore, they show how to learn such submodular functions.

Badanidiyuru et al.~\cite{Badanidiyuru:2012}, motivated by the problem of communicating bidders' valuations in combinatorial auctions, study how well specific classes of set functions can be approximated given the constraint that the approximating function be representable in polynomially many bits. They named such an approximation a {\em sketch}, proving that coverage functions admit sketches with an arbitrarily good approximation factors. Additionally, for the larger class of monotone subadditive functions, they construct sketches that achieve an approximation of $\sqrt{n}\cdot polylog(n)$. Combining the results of Badanidiyuru et al.~\cite{Badanidiyuru:2012} and Balcan and Harvey~\cite{Balcan:2010}, a lower bound of $\Omega(n^{1/3}/\log^2{n})$ follows for the approximation of monotone submodular functions by budgeted additive functions (Appendix~\ref{sec:monotone-by-budgeted}).

Testing of submodular functions has been studied recently by Seshadri and Vondr{\'a}k~\cite{Seshadhri:2010} for general monotone submodular functions and for coverage functions by Chakrabarty and Huang~\cite{ChakrabartyH12}. The goal here is to query the function on few domain points and answer whether a function is \emph{close} to being submodular or not. The measure of closeness is the fraction of the domain in which the function needs to be modified so as to make it submodular.

\subsection{Preliminaries and Formal Statement of Results}
Given a ground set $U$, a function $f:2^{U}\rightarrow \RR_+$ is called \emph{submodular} if for all subsets $S,T\subseteq U$, we have
$f(S)+f(T)\geq f(S\cup T)+f(S\cap T)$.
A submodular function $f$ is called non-negative if $f(S)\geq 0$ for each $S\subseteq U$. In this paper, we only consider non-negative submodular functions. A submodular
function $f$ is called \emph{symmetric} if $f(S)=f(U\setminus S)$ for each $S\subseteq U$, and \emph{monotone} if $f(S)\leq f(T)$ for each $S\subseteq T\subseteq U$.
We say that a class of functions $\G$ {\em $\theta$-approximates} another class $\F$, if for every $f\in \F$ there exists a $g\in\G$ such that
$ f(S) \leq g(S) \leq \theta \cdot f(S)$ for any $S\subseteq U$. We denote by $n$ the size of the ground set $U$.
%
%
%
%
%
%
We now define certain subclasses of submodular functions that we consider in the paper.
\begin{definition}[Coverage function]\label{def:coverage}
A function $f$ is a {\em coverage function} if there exists an auxiliary ground set $Z$, a weight function $w:Z\rightarrow \RR_+$ and family of subsets $\left\{ A_i:A_i\subseteq Z,i\in U\right\}$ such that $\forall S\subseteq U$, $f(S) = \sum _{z\in \cup_{i\in S} A_i}w(z)$. 
\end{definition}

\begin{definition}[Budgeted additive function]\label{def:budgeted-additive}
A function $f$ is a {\em budgeted additive function} if there exist non-negative reals $a_i$ for each $i\in U$, and a non-negative real $B$ such that $\forall S\subseteq U$, $f(S)=\min\{B,\sum_{i\in S}a_i\}$.
\end{definition}
It is well known that coverage functions and budgeted additive functions are monotone submodular functions.

\begin{definition}[Cut function]\label{def:cut-function}
A function $f$ is a {\em directed cut function} if there exists a directed graph $G=(U,A)$ with non-negative arc
weights $w:A\rightarrow \RR_+$, such that $\forall S \subseteq U$, $f(S)=w(\delta^+(S))$, where $\delta^+(S)$ denotes the set of outgoing arcs, with their tails in $S$ and heads in $\bar{S}$, and $w(F) \triangleq \sum_{a\in F}w(a)$ for any subset $F\subseteq A$ of arcs.
%
\end{definition}
Similarly, one can define $f$ to be the {\em undirected cut function} of an undirected graph by substituting $\delta^+(S)$ with $\delta(S)$, the set of edges with exactly one endpoint in $S$.
It is well known that cut functions, whether directed or undirected, are submodular. Furthermore, clearly, undirected cut functions are symmetric.

Let us now formally state our main results:

\begin{theorem}\label{them:General_DirectedCUT}
Let $f:2^{U}\rightarrow \RR_+$ be a non-negative submodular function with $f(\emptyset)=f(U)=0$.
Then the class of directed cut functions $(n^2/4)$-approximates $f$.
Moreover, there exists a non-negative submodular function $f:2^{U}\rightarrow \RR_+$ with $f(\emptyset)=f(U)=0$ such that any directed cut function cannot approximate $f$ within a factor better than $n^2/4$.
\end{theorem}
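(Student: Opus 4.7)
For the upper bound, I would begin by analyzing the zero-set structure of $f$. The collection $\mathcal{Z} = \{S : f(S) = 0\}$ is closed under union and intersection: if $f(S) = f(T) = 0$, the diamond inequality together with $f \geq 0$ forces $f(S \cap T) = f(S \cup T) = 0$. Moreover $\emptyset, U \in \mathcal{Z}$. For each $i \in U$ let $R(i) = \bigcap_{Z \in \mathcal{Z}, Z \ni i} Z$ be the smallest zero-set containing $i$. Any directed cut $g$ with $g \leq \theta f$ for finite $\theta$ must vanish on every $Z \in \mathcal{Z}$; equivalently, any arc $(i,j)$ with $w_{ij} > 0$ must satisfy $j \in R(i)$, i.e.\ respect the preorder $i \preceq j \iff j \in R(i)$. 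My plan is to construct weights $w_{ij}$ on such arcs either via LP (minimizing the ratio subject to $f(S) \leq g_w(S) \leq \theta f(S)$) or via a combinatorial recipe built on a linear extension $u_1,\ldots,u_n$ of $\preceq$, with weights tuned so that the cuts of the prefix sets $\{u_1,\ldots,u_k\}$ match $f$ exactly. The approximation factor $n^2/4$ should emerge from the identity $|S|(n - |S|) \leq n^2/4$, which bounds the number of allowed arcs crossing any cut, combined with submodularity bounding each weight in terms of $f$-values of separating sets.

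For the lower bound, I would exhibit the following hard instance. Take $n$ even (odd $n$ is analogous with $\lceil n/2 \rceil$ and $\lfloor n/2 \rfloor$), partition $U = A \cupdot B$ with $|A| = |B| = n/2$, and define
\begin{equation*}
f(S) = \begin{cases} 1, & S \cap A \neq \emptyset \text{ and } B \not\subseteq S, \\ 0, & \text{otherwise}. \end{cases}
\end{equation*}
Clearly $f(\emptyset) = f(U) = 0$, and submodularity follows from a direct case analysis on whether $S, T$ satisfy the two defining conditions. The zero-set lattice is $\mathcal{Z} = \{S : S \cap A = \emptyset\} \cup \{S : B \subseteq S\}$, and the induced preorder forces any valid approximating $g$ to have positive-weight arcs only from $A$ to $B$. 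Now for each $(a, b) \in A \times B$, the set $S_{a,b} = \{a\} \cup (B \setminus \{b\})$ satisfies $f(S_{a,b}) = 1$, while the only arc crossing $\delta^+(S_{a,b})$ in the allowed graph is $(a, b)$; hence $w_{a,b} \geq 1$. Summing over all $(a,b) \in A \times B$ gives $g(A) \geq |A|\cdot|B| = n^2/4$, whereas $f(A) = 1$. Therefore any approximating directed cut attains ratio $g(A)/f(A) \geq n^2/4$.

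The main obstacle is the upper bound construction. Placing arcs compatibly with $R(\cdot)$ is straightforward, but choosing weights that simultaneously (i) dominate $f$ everywhere, (ii) inflate $f$ by a factor of at most $n^2/4$, and (iii) exploit submodularity is delicate. A naive pointwise rule such as $w_{ij} = \max_{T : i \in T, j \notin T} f(T)/(|T|(n-|T|))$ fails to give a uniform $n^2/4$ bound because the ratio can be controlled only relative to sets the arc actually crosses, not absolutely. I expect the correct argument to use either LP duality—bounding the dual objective $\sum \alpha_S f(S)$ by $n^2/4$ from the cut-coefficient constraints $\sum_{S : i \in S, j \notin S}(\beta_S - \alpha_S) \geq 0$ combined with submodularity—or a recursive decomposition along the chain structure of $\mathcal{Z}$, with the tight factor $n^2/4$ reflecting the maximum $|S|(n-|S|)$ of allowed arcs crossing a balanced cut and matching the lower bound instance exactly.
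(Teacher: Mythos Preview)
Your lower bound is correct and matches the paper's construction exactly: the same indicator function on a balanced bipartition $U = A \cupdot B$, the same argument that nonzero arcs must go from $A$ to $B$, and the same use of the sets $\{a\} \cup (B \setminus \{b\})$ to force $w_{ab} \geq 1$.

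The upper bound, however, has a genuine gap---and you acknowledge as much. The missing idea is simple: set
\[
w_{uv} \;=\; \min\{\,f(R) : u \in R,\ v \notin R\,\}
\]
for \emph{every} ordered pair $(u,v)$. You were exploring a $\max$-based rule scaled by $|T|(n-|T|)$; the paper instead uses the unscaled $\min$. With this choice the bound $w(\delta^+(S)) \leq (n^2/4)\, f(S)$ is immediate, since $S$ itself is a candidate separating set for each $(u,v) \in S \times \bar{S}$, so $w_{uv} \leq f(S)$ and there are at most $|S|\,|\bar{S}| \leq n^2/4$ such arcs. The reverse inequality $f(S) \leq w(\delta^+(S))$ is where submodularity enters: letting $T_{uv}$ realize the minimum, one checks that $\bigcup_{u \in S} \bigcap_{v \in \bar S} T_{uv} = S$, and then applies subadditivity of $f$ over unions together with the intersection inequality $f\bigl(\bigcap_i A_i\bigr) \leq \sum_i f(A_i)$, both consequences of submodularity plus nonnegativity.

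Your zero-set/preorder analysis is not wrong, but it is subsumed: if some $Z$ with $f(Z)=0$ has $u \in Z$ and $v \notin Z$, then $w_{uv} \leq f(Z) = 0$, so the $\min$-construction automatically zeroes out precisely the arcs your preorder would forbid. The LP-duality and linear-extension routes you sketch are unnecessary detours once this construction is in hand.
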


\begin{theorem}\label{them:Symmetric_UndirectedCUT}
Let $f:2^{U}\rightarrow \RR_+$ be a non-negative symmetric submodular function with $f(\emptyset)=0$. Then the class of undirected cut functions $(n-1)$-approximates $f$.
Moreover, there exists a symmetric submodular function $f:2^{U}\rightarrow \RR_+$ with $f(\emptyset)=0$ such that any undirected cut function cannot approximate $f$ within a factor better than $n/4$.
\end{theorem}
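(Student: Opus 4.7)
The plan is to use the Gomory--Hu (Queyranne) tree $T$ of $f$ for the upper bound and a simple extremal example for the lower bound. Queyranne's extension of Gomory--Hu to symmetric submodular functions produces a tree $T$ on $U$ with non-negative edge weights such that for every $x,y\in U$ the minimum $f$-cut separating $x$ from $y$ equals the minimum-weight edge on the $x$-$y$ path in $T$, and each tree edge $e=\{x,y\}$ itself realises an optimal $x$-$y$ cut with $w(e)=\lambda(x,y)$. Let $g_T$ denote the cut function of $T$. The inequality $g_T(S)\le(n-1)\,f(S)$ is immediate: every $e=\{x,y\}\in\delta_T(S)$ satisfies $w(e)=\lambda(x,y)\le f(S)$ because $S$ separates $x$ from $y$, and $T$ has at most $n-1$ edges.

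The reverse inequality $f(S)\le g_T(S)$ I would prove by induction on $n$, driven by Queyranne's pendant-pair construction of $T$. One finds a pendant pair $(u,v)$ with $f(\{v\})=\lambda(u,v)$, defines the contracted function $f'$ on $U\setminus\{v\}$ by $f'(S)=f(S\cup\{v\})$ if $u\in S$ and $f'(S)=f(S)$ otherwise, checks that $f'$ is again symmetric and submodular, and recurses to obtain $T'=T-v$ as its Gomory--Hu tree. Assuming inductively $f'(S')\le g_{T'}(S')$ for all $S'\subseteq U\setminus\{v\}$, one case-analyses on the membership of $u$ and $v$ in $S$. The substantive case is $v\in S$, $u\notin S$: writing $S'=S\setminus\{v\}$, we get $g_T(S)=f(\{v\})+g_{T'}(S')\ge f(\{v\})+f(S')$, so $f(S)\le g_T(S)$ reduces to $f(S)-f(S\setminus\{v\})\le f(\{v\})$, which is exactly the diminishing-marginals form of submodularity. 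When $u,v$ lie on the same side of $S$, the edge $(u,v)\notin\delta_T(S)$ and the claim follows from the inductive hypothesis applied to the appropriate set in $U\setminus\{v\}$; the case $u\in S$, $v\notin S$ reduces to the first by replacing $S$ with its complement, using the symmetry of both $f$ and $g_T$.

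For the lower bound I would take $f(S)=1$ for every $S\notin\{\emptyset,U\}$ and $f(\emptyset)=f(U)=0$. Symmetry is immediate, and submodularity follows from a one-line case check: when $S,T$ are both non-trivial the inequality reads $2\ge f(S\cup T)+f(S\cap T)$ which always holds, and otherwise it is an equality. If $g$ is any undirected cut function satisfying $f\le g\le\alpha f$, then applying the lower bound to every singleton forces every weighted vertex degree to be at least $1$, so the total edge weight of the graph is at least $n/2$; the standard random-cut argument then produces a cut of value at least $n/4$, forcing $\alpha\ge n/4$. The main technical hurdle is the inductive step of the upper bound, where one must carefully define the contracted function $f'$, verify that it is symmetric and submodular, and check that its Gomory--Hu tree is exactly $T-v$; once that structural setup is in place, the induction reduces to elementary submodularity plus a symmetry argument to halve the number of cases.
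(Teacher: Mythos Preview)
Your proof is correct; the lower bound and the easy half of the upper bound ($g_T(S)\le(n-1)f(S)$) match the paper verbatim.  The substantive difference is in how you establish $f(S)\le g_T(S)$.

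The paper gives a direct, non-inductive argument: given $S$, delete the crossing edges $e_1,\dots,e_k$ from $T$, let $S_1,\dots,S_p$ be the pieces of $S$ lying in the resulting components, and show for each piece that $f(S_i)\le\sum_{e\in F_i}w(e)$ by writing $S_i=\bigcap_{e\in F_i}\bar R_e$ and invoking submodularity and symmetry of $f$; one more application of subadditivity then sums the pieces.  Your route instead traces Queyranne's recursive construction of $T$, peeling off a pendant leaf and reducing to the contracted function $f'$ on $n-1$ elements.  Both arguments ultimately rest on the same two ingredients---diminishing marginals and symmetry---but yours localises them to a single element at a time, whereas the paper's decomposition handles all of $S$ in one shot.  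The advantage of your approach is that it makes explicit why the Gomory--Hu tree is the ``right'' object (it is literally built so that this inequality telescopes); its cost is the structural overhead of checking that $f'$ is symmetric submodular and that $T-v$ is its Gomory--Hu tree, which the paper's proof sidesteps entirely by treating $T$ as a black box satisfying the min-cut property.
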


\begin{theorem}\label{thrm:Bugdet_by_Coverage}
Let $f:2^{U}\rightarrow \RR_+$ be a budgeted additive function. Then coverage functions $(e/(e-1))$-approximates $f$.
Moreover, for every fixed $\varepsilon > 0$, there exists a budgeted additive function $f:2^{U}\rightarrow \RR_+$ such that any coverage function cannot approximate
$f$ within a factor better than $e/(e-1)-\varepsilon$.
\end{theorem}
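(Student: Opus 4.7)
I would normalize so that $B=1$ and $a_i \in [0,1]$ (any $a_i > B$ can be clipped to $B$ without changing $f$), and propose the coverage function
\[
g(S) \;=\; \frac{e}{e-1}\left(1 - \prod_{i \in S}(1-a_i)\right),
\]
which is realized by taking $Z = 2^U$ with weights $w(T) = \frac{e}{e-1}\prod_{i \in T} a_i \prod_{j \notin T}(1-a_j)$ and $A_i = \{T : i \in T\}$; a direct expansion confirms $g(S) = \sum_{T:T \cap S \ne \emptyset} w(T)$. Both inequalities reduce to elementary bounds on $\prod_{i \in S}(1-a_i)$: the Weierstrass inequality $\prod(1-a_i) \ge 1 - \sum a_i$ (together with $\prod \ge 0$) gives $1 - \prod(1-a_i) \le \min\{1, \sum a_i\} = f(S)$, yielding $g(S) \le \frac{e}{e-1}f(S)$; and $\prod(1-a_i) \le e^{-\sum a_i}$ combined with the concavity fact $1 - e^{-x} \ge (1 - e^{-1})\min\{1, x\}$ (since $(1-e^{-x})/x$ is decreasing on $(0,1]$ with value $1-1/e$ at $x=1$, and $1-e^{-x} \ge 1-e^{-1}$ for $x \ge 1$) yields $g(S) \ge f(S)$.

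\textbf{Lower bound.} The plan is to formulate the best coverage approximation as a linear program and present a feasible dual solution whose objective approaches $e/(e-1)$. The primal LP, with variables $\theta$ and $\{w_T \ge 0\}_T$, reads
\[
\min\, \theta \quad\text{s.t.}\quad f(S) \le \!\!\sum_{T \,:\, T \cap S \ne \emptyset}\!\! w_T \le \theta f(S) \quad \forall\, S \ne \emptyset.
\]
Assigning nonnegative multipliers $\alpha_S, \beta_S$ to the two families of constraints and dualizing yields
\[
\max \sum_S \alpha_S f(S) \quad\text{s.t.}\quad \sum_S \beta_S f(S) \le 1,\quad \sum_{S \,:\, T \cap S \ne \emptyset}\!\! \alpha_S \le \sum_{S \,:\, T \cap S \ne \emptyset}\!\! \beta_S \quad \forall\, T \ne \emptyset.
\]
I would specialize to the uniform budgeted additive $f(S) = \min\{1, |S|/\ell\}$ on $U = [k]$ (with $a_i = 1/\ell$, $B = 1$) and take $k/\ell \to \infty$. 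By the symmetry of $f$, the dual admits an optimum with $\alpha_S, \beta_S$ depending only on $|S|$; guided by complementary slackness with the primal upper bound (whose ratio equals $1$ at $|S| = \ell$ and approaches $e/(e-1)$ both as $|S| \to 0$ and as $|S| \to \infty$ relative to $\ell$), I take
\[
\alpha_S = \alpha\,\mathbf{1}[|S| = \ell], \qquad \beta_S = \beta_1\,\mathbf{1}[|S| = 1] + \beta_2\,\mathbf{1}[|S| = k].
\]
The dual constraints then reduce to $\alpha(\binom{k}{\ell} - \binom{k-t}{\ell}) \le \beta_1 t + \beta_2$ for each $t \ge 1$, together with $\beta_1(k/\ell) + \beta_2 \le 1$, and the dual objective is $\alpha\binom{k}{\ell}$.

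\textbf{Main obstacle.} The crux is the asymptotic analysis. Writing $\gamma = \beta_1 k/\ell$ and rescaling $u = \ell t/k$, the ratio $\binom{k-t}{\ell}/\binom{k}{\ell}$ converges to $e^{-u}$ as $k \to \infty$ with $k/\ell \to \infty$, and the dual value $\alpha\binom{k}{\ell}$ is bounded above by $\min_{u > 0}(\gamma u + \beta_2)/(1 - e^{-u})$ subject to $\gamma + \beta_2 \le 1$. A short calculus exercise gives the critical equation $\gamma(e^u - 1 - u) = \beta_2$; substituting back makes the inner minimum equal $\gamma e^u$, so the outer problem is to maximize $\gamma e^u$ subject to $\gamma(e^u - u) \le 1$, a one-variable problem whose maximum $e/(e-1)$ is attained at $u = 1$ with $\gamma = 1/(e-1)$ and $\beta_2 = (e-2)/(e-1)$. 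The main technical obstacle I anticipate is verifying this asymptotic bound is actually realized by the discrete dual at finite $k, \ell$: the binomial-to-exponential substitution must be uniform across all integer $t$, including very small and very large $t$. If a strictly concentrated $\beta_S$ at $|S| = k$ fails the dual inequality for some intermediate $t$, I would smooth $\beta_2$ across a range of large $|S|$-values (for instance, a small window around $|S| = k$) to robustly restore feasibility without affecting the limiting objective, thereby certifying $\theta \ge e/(e-1) - \varepsilon$ for every $\varepsilon > 0$.
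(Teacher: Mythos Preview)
Your proposal is correct and, for the lower bound, follows essentially the same LP-duality template as the paper (same primal, same symmetrized dual, same support for the dual solution: mass on $|S|=\ell$ for the ``$\alpha$''-side and on $|S|=1$ and $|S|=n$ for the ``$\beta$''-side). Two remarks are worth making.

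\textbf{Upper bound.} Your construction is genuinely different from the paper's and arguably cleaner. The paper builds a \emph{randomized} coverage instance (sample $v_i$ points with replacement from an auxiliary ground set of size $B$) and takes the expectation, obtaining $g'(S)=B\bigl(1-(1-1/B)^{\sum_{i\in S}v_i}\bigr)$; the bounds then follow from $1-(1-1/B)^V\ge 1-e^{-V/B}\ge(1-1/e)\min\{1,V/B\}$. Your explicit formula $g(S)=\tfrac{e}{e-1}\bigl(1-\prod_{i\in S}(1-a_i)\bigr)$ is a direct deterministic witness, verified as a coverage function via the Chakrabarty--Huang characterization, and both inequalities reduce to the same one-variable inequality $1-e^{-x}\ge(1-1/e)\min\{1,x\}$. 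What your route buys is an explicit closed form with no appeal to expectation; what the paper's route buys is a concrete combinatorial set system.

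\textbf{Lower bound.} The ``main obstacle'' you flag --- that the concentrated dual might fail feasibility at some intermediate $t$, necessitating smoothing of $\beta_2$ --- does not actually arise. The paper observes that $c_j:=\binom{n}{\ell}-\binom{n-j}{\ell}$ is increasing and \emph{concave} in $j$ (indeed $\Delta c_j=\binom{n-j-1}{\ell-1}$ is decreasing). Hence the tangent line to $c_j$ at $j=\ell$ dominates $c_j$ for all $j$, which is exactly the inequality $(j-\ell)\Delta c_\ell+c_\ell\ge c_j$ that your dual feasibility requires. With this, your concentrated dual is \emph{exactly} feasible for every finite $n,\ell$, and the objective equals $1-\binom{n-\ell}{\ell}/\binom{n}{\ell}$ on the nose (the paper takes $n=\ell^2$, making the $u_1$ and $u_n$ contributions balance). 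So the asymptotic/uniformity analysis you anticipate can be replaced by this one-line concavity argument, and no smoothing is needed.
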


Theorems~\ref{them:General_DirectedCUT}, \ref{them:Symmetric_UndirectedCUT} and \ref{thrm:Bugdet_by_Coverage}, are proved in Sections~\ref{sec:General_DirectedCUT},
~\ref{sec:Symmetric_UndirectedCUT} and \ref{sec:Budget_by_Coverage} respectively.

\section{Approximating General Submodular Functions by Directed Cut Functions of Graphs}\label{sec:General_DirectedCUT}
In this section we prove Theorem~\ref{them:General_DirectedCUT} which provides a tight approximation of a non-negative submodular function $f$ using a directed cut function of a graph $G$.
Before proving the main result of this section, we first state a technical lemma, whose proof we defer to Appendix~\ref{sec:omitted-proofs}.
\begin{lemma}\label{lem:Submodular_Intersection}
For every submodular function $f$, and any collection of sets $A_{1}$, $A_{2}$, $\ldots$, $A_{n} \subseteq U$: $f(\cap_{i=1}^{n}A_{i}) \le \sum_{i=1}^{n}f(A_{i})$.
\end{lemma}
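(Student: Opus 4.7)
The plan is to reduce the $n$-set statement to the two-set case via a straightforward induction on $n$, with the two-set case following from submodularity together with non-negativity.

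For the base step of the two-set inequality, I would start from the defining submodular inequality
\[
f(A_1) + f(A_2) \ge f(A_1 \cup A_2) + f(A_1 \cap A_2),
\]
rearrange to obtain $f(A_1 \cap A_2) \le f(A_1) + f(A_2) - f(A_1 \cup A_2)$, and then invoke the standing assumption that $f$ is non-negative (the paper explicitly restricts attention to non-negative submodular functions) to discard the $-f(A_1 \cup A_2)$ term, yielding $f(A_1 \cap A_2) \le f(A_1) + f(A_2)$.

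For the inductive step, assume the claim holds for any collection of $n-1$ sets. Writing $B = \bigcap_{i=1}^{n-1} A_i$, apply the two-set inequality to $B$ and $A_n$ to get
\[
f\!\left(\bigcap_{i=1}^{n} A_i\right) = f(B \cap A_n) \le f(B) + f(A_n) \le \sum_{i=1}^{n-1} f(A_i) + f(A_n) = \sum_{i=1}^{n} f(A_i),
\]
where the middle inequality is the inductive hypothesis applied to $A_1, \ldots, A_{n-1}$. The base case $n=1$ is trivial.

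I do not anticipate a real obstacle here; the only subtle point is that the inequality genuinely requires non-negativity (otherwise $-f(A_1 \cup A_2)$ cannot be dropped), so it is important to note explicitly that we are using the standing assumption $f \ge 0$ from the preliminaries.
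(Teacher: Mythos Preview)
Your proposal is correct and is essentially the same argument as the paper's: the paper writes out the chain of submodular inequalities
\[
f(A_k)+f\bigl(\cap_{i<k}A_i\bigr)\ge f\bigl(\cap_{i\le k}A_i\bigr)+f\bigl((\cap_{i<k}A_i)\cup A_k\bigr)
\]
for $k=2,\ldots,n$, sums and telescopes, and drops the union terms by non-negativity. Your inductive phrasing is just the same telescoping packaged as an induction, and you correctly flag that non-negativity is essential.
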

We are now ready to prove Theorem~\ref{them:General_DirectedCUT}.

\begin{proof}[Proof of Theorem \ref{them:General_DirectedCUT}]
{\bf \hspace*{\fill} \\Upper Bound:}
Given a submodular function $f$, we  construct a directed graph $G=(U,A)$ with non-negative weights $w$ on the arcs such for every $S\subseteq U$,
$ f(S)\leq w(\delta^+ (S)) \leq n^2/4\cdot f(S)$.
%
For every $(u,v) \in U \times U$ and $u\neq v$, introduce a directed arc from $u$ to $v$ with weight: $w_{uv} =  f(T_{uv}) $ where $ T_{uv}  = \text{argmin} \left\{f(R)~:~ R\subseteq U, u \in R, v \notin R\right\}$.
%
We start by proving that:
\begin{align}
 f(S) \le w(\delta^+(S)) & ~~~\forall S\subseteq U. \label{eq1}
\end{align}
If $S=U$ or $S=\phi$, then clearly $w(\delta^+(S))=f(S)=0$ and (\ref{eq1}) holds.
We now restrict our attention to the case where $S,\bar{S}\neq \emptyset$.
For any $u\in S$ note that $u \in \cap_{v \in \bar{S}}T_{uv}$, since the definition of $T_{uv}$ implies that $u\in T_{uv}$ for all $v\in \bar{S}$.
Additionally, for any $w\in \bar{S}$ note that $w \notin \cap_{v \in \bar{S}}T_{uv}$, since the definition of $T_{uv}$ implies that $w \notin T_{uw}$.
Thus, one can conclude that $\cup_{u \in S}\cap_{v\in \bar{S}}T_{uv}=S $ and therefore,
$$  f(S) \overset{\text{(\romannumeral 1)}}\le \sum_{u \in S}f(\cap_{v\in \bar{S}}T_{uv}) \overset{\text{(\romannumeral 2)}}\le \sum_{u \in S}\sum_{v \in \bar{S}}w_{uv}=w(\delta ^+(S)).$$
Inequality (\romannumeral 1) is derived from the fact that $f$ is submodular and non-negative. Inequality (\romannumeral 2) is derived from Lemma \ref{lem:Submodular_Intersection}
and the definition of $w_{uv}$.
This concludes the proof of (\ref{eq1}).

We continue by proving that:
\begin{align}
w(\delta ^+ (S)) \leq \frac{n^2}{4} f(S)~~~ \forall S\subseteq U. \label{eq2}
\end{align}
If $S=U$ or $S=\phi$, then clearly $w(\delta^+(S))=f(S)=0$ and (\ref{eq2}) holds.
We now restrict our attention to the case where $S,\bar{S}\neq \emptyset$.
Note that for any $u\in S$ and $v\in \bar{S}$, by the definition of $T_{uv}$: $f(T_{uv})\leq f(S)$.
Thus, one can conclude that: $$ w(\delta^+ (S)) \overset{\text{(\romannumeral 1)}}= \sum _{u\in S} \sum _{v\in \bar{S}}f(T_{uv}) \overset{\text{(\romannumeral 2)}}\leq \sum _{u\in S} \sum _{v\in \bar{S}}f(S)
\overset{\text{(\romannumeral 3)}}\leq \frac{n^2}{4}f(S).$$
Equality (\romannumeral 1) is by the definition of weights $w_{uv}$.
Inequality (\romannumeral 2) is be the definition of $T_{uv}$.
Inequality (\romannumeral 3) is by the fact that the number of pairs $(u,v)\in S\times \bar{S}$ is at most $n^2/4$.
This concludes the proof of (\ref{eq2}). Combining both (\ref{eq1}) and (\ref{eq2}) concludes the proof of the upper bound of the theorem.

\noindent {\bf Lower Bound:}
Assume that $n$ is even and fix an arbitrary $A\subseteq U$ of size $\vert A \vert =n/2$.
Consider the following function $f$
\begin{align*}
f(S)=\left\{
\begin{array}{rl}
1&\text{ if }S\cap A \ne \emptyset, \bar{A}\setminus S \ne \emptyset\\
0& \text{ otherwise }
\end{array}
\right.
\end{align*}
Namely, $f(S)$ is the indicator function that $S$ hits $A$ but does not hit all of $\bar{A}$. A simple check shows that $f$ is submodular.
Let $G=(U,A)$ be a weighted graph with non-negative weights $w:A\rightarrow \RR_+$ on the arcs whose directed cut function satisfies for each set $S\subseteq U$,
$f(S)\leq w(\delta^{+}(S))\leq \theta \cdot f(S)$
for some $\theta$. We will show that $\theta \geq \frac{n^2}{4}$ proving the lower bound.

First, we prove that the arcs with non-zero weight must go from $A$ to $\bar{A}$.
We consider the following cases.
\begin{enumerate}
\item Consider an edge $(u,v) \in A \times A$. But $(u,v)\in \delta^+({U\setminus \left\{v\right\}})$ and $w(\delta ^+(U\setminus \left\{v\right\})) \leq \theta \cdot f(U\setminus \left\{v\right\})=0$ since $\bar{A}\setminus (U\setminus \left\{v\right\})=\emptyset$. Thus, $w_{(u,v)}=0$.
\item Consider an edge $(u,v) \in \bar{A} \times \bar{A}$. But $(u,v)\in \delta^+(\bar{A}\setminus \{v\})$ and $w(\delta ^+(\bar{A}\setminus \left\{v\right\})) \leq \theta \cdot f(\bar{A}\setminus \left\{v\right\})=0$ since $A \cap (\bar{A}\setminus \left\{v\right\})=\emptyset$. Thus, $w_{(u,v)}=0$.
\item Consider an edge $(u,v) \in \bar{A} \times A$. But $(u,v)\in \delta^+(\bar{A})$ and , $w(\delta ^+(\bar{A}))\leq \theta \cdot f(\bar{A})=0$ since $A \cap \bar{A}=\emptyset$. Hence, $w_{(u,v)}=0$.
\end{enumerate}
Therefore, all arcs with non-zero weight must go from $A$ to $\bar{A}$. For any $u\in A$ and $v\in \bar{A}$ note that $w_{(u,v)}\geq 1$ since:
\begin{align}
w_{(u,v)}=w(\delta ^+\left(\left\{ u\right\}\cup \left( \bar{A}\setminus \left\{ v\right\}\right)\right)) \overset{\text{(\romannumeral 1)}}\geq   f \left(\left\{ u\right\}\cup \left( \bar{A}\setminus \left\{ v\right\}\right)\right) \overset{\text{(\romannumeral 2)}}= 1. \label{eq3}
\end{align}
Inequality (\romannumeral 1) is derived from the fact $w(\delta^{+}(S))\geq f(S)$ for each set $S$. Equality (\romannumeral 2) is by the definition of $f$.
Furthermore, note that:
\begin{align}
 \frac{n^2}{4}=\vert A\vert\cdot \vert\bar{A}\vert \overset{\text{(\romannumeral 1)}} \leq w(\delta ^+(A)) \overset{\text{(\romannumeral 2)}}\leq \theta\cdot f(A) \overset{\text{(\romannumeral 3)}}= \theta .\label{eq4}
\end{align}
Inequality (\romannumeral 1) is derived from inequality (\ref{eq3}).
Inequality (\romannumeral 2) is derived from the fact that $w(\delta ^+(S))\leq \theta f(S)$ for each set $S\subseteq U$.
Equality (\romannumeral 3) is by the definition of $f$.
Note that inequality (\ref{eq4}) implies that $\theta \geq \frac{n^2}{4}$, thus, concluding the proof of the lower bound of the theorem.
\end{proof}

\section{Approximating Symmetric Submodular Functions by Undirected Cut Functions of Graphs}\label{sec:Symmetric_UndirectedCUT}
In this section, we prove Theorem~\ref{them:Symmetric_UndirectedCUT} which provides upper and lower bounds on the approximation of a symmetric submodular function using an undirected cut function of a graph.
For the upper bound, our algorithm uses Gomory-Hu trees of symmetric submodular functions \cite{GomoryH61,Queyranne93}. Given a symmetric non-negative submodular function $f$, a tree $T=(U,E_T)$ is a Gomory-Hu tree if for every edge $e=(u,v)\in E_T$: $ f(R_e) = \min \left\{ f(R)~:~ R\subseteq U, u\in R, v\notin R\right\}$,
where $R_e$ is one of the two connected components obtained after removing $e$ from $T$ (since $f$ is symmetric, it does not matter which one of the two connected components we choose). In other words, in a Gomory-Hu tree, the cut $e=(u,v)$ induced in $T$, corresponds to a minimum value subset that separates $u$ and $v$. We prove that the cut function of the Gomory-Hu tree of $f$ is a good approximation.

\begin{proof} [Proof of Theorem \ref{them:Symmetric_UndirectedCUT}]
{\bf \hspace*{\fill} \\ Upper Bound:}
Let $f$ be a symmetric submodular function. We shall construct an undirected tree $T=(U,E_T)$ with non-negative weights $w:E\rightarrow \RR_+$ on the edges such that
for every $S\subseteq U$, $ f(S) \leq w(\delta(S)) \leq (n-1)\cdot f(S)$.
%
We set $T$ to be a Gomory-Hu tree of $f$ and let the weight of any edge $e=\{u,v\}$ to be $f(R_e)$ where $R_e$ is the one of the two connected components obtained after removing edge $e$. As mentioned above, the weight of edge $e=\{u,v\}$ is the minimum of $f(R)$ over all $R$ separating $u$ and $v$.

Fix an arbitrary $S\subseteq U$ and denote by $\left\{ e_1,\ldots,e_k\right\}$ all the edges crossing the cut that $S$ defines in $T$.
Let $T_1,\ldots,T_{k+1}$ denote the partition of $U$ induced by deleting the edges $e_1,\ldots,e_k$ from $T$. Furthermore, denote by $\{S_1,\ldots,S_p\}$ the non-empty sets in $\{T_i\cap S:1\leq i\leq k+1\}$. Observe that $S_1,\ldots,S_p$ is a partition of $S$.
Since each $e_i$, $1\leq i\leq k$, has exactly one vertex in $S$ and the other in $\bar{S}$, we can associate $e_i$ with a {\em unique}
set from $S_1,\ldots,S_p$, the set containing one of the endpoints of $e$. Additionally, let us denote $F_i$ to be the edges which are associated
 with set $S_i$ for each $1\leq i\leq p$. Clearly $F_1,\ldots,F_p$ form a partition of $\{e_1,\ldots, e_k\}$.

We claim that for every $1\leq i\leq p$:
\begin{align}
f(S_i) \leq \sum _{f\in F_i}f(R_{f}). \label{eq5}
\end{align}
Recall that $R_f$ is one of the connected component after removing edge $f$ from $T$. Since $S_i$ is a subset of a connected component formed after removing all the edges $\{e_1,\ldots,e_k\}$ from $T$, it must be contained in one of the components formed after removing edge $f\in F_i$ from $T$. Without loss of generality, we assume that $R_{f}\cap S_i=\emptyset$ for each edge $f\in F_i$. It is straightforward to see that $\cap_{f\in F_i} \bar{R}_f =S_i$. Now, we have
$$ \sum _{f\in F_i}f(R_f)  \overset{\text{(\romannumeral 1)}}\geq f\left( \cup _{e\in F_i}R_{f}\right)
  \overset{\text{(\romannumeral 2)}}= f\left( \overline{\cup _{f\in F_i}R_{f}}\right)
  = f\left( \cap _{e\in F_i}\bar{R}_{f}\right)
  \overset{\text{(\romannumeral 3)}}= f(S_i). $$
Inequality (\romannumeral 1) is derived from the fact that $f$ is submodular and non-negative.
Equality (\romannumeral 2) is derived from the symmetry of $f$.
Equality (\romannumeral 3) is derived from the fact that $\cap_{f\in F_i} \bar{R}_f =S_i$.

We start by proving that:
\begin{align}
 f(S) \leq w(\delta(S)) ~~~\forall S \subseteq U. \label{eq6}
\end{align}
This can be proved as follows:
$$ w(\delta(S)) \overset{\text{(\romannumeral 1)}}=  \sum _{i=1}^k f(R_{e_i})
\overset{\text{(\romannumeral 2)}}= \sum _{i=1}^p \sum _{f\in F_i} f(R_{f})
\overset{\text{(\romannumeral 3)}}\geq \sum _{i=1}^p f(S_i)
\overset{\text{(\romannumeral 4)}}\geq f\left(\cup _{i=1}^p S_i\right) =  f(S).$$
Equality (\romannumeral 1) is by the definition of edge weights in $T$.
Equality (\romannumeral 2) is by the fact that $F_1,\ldots,F_p$ form a partition of $\{e_1,\ldots, e_k\}$.
Inequality (\romannumeral 3) is derived from inequality (\ref{eq5}).
Inequality (\romannumeral 4) is derived from the fact that $f$ is submodular and non-negative.
This concludes the proof of (\ref{eq6}).

We continue by proving that:
\begin{align}
w(\delta(S)) \leq (n-1)\cdot f(S) ~~~\forall S \subseteq U. \label{eq7}
\end{align}
Let $u$ and $v$ be the endpoints of edge $e_i$ and without loss of generality assume that $u\in S$ and $v\notin S$.
Note that for every $1\leq i\leq k$, $f(R_{e_i})\leq f(S)$ since $S$ is a candidate set separating $u$ and $v$.
Hence, one can conclude that:
$$ w(\delta(S)) \overset{\text{(\romannumeral 1)}}=  \sum _{i=1}^k f(R_{e_i}) \overset{\text{(\romannumeral 2)}}\leq k \cdot f(S)
\overset{\text{(\romannumeral 3)}}\leq (n-1)\cdot f(S).$$
Equality (\romannumeral 1) is by the definition of edge weights in $T$.
Inequality (\romannumeral 2) is what we proved above, and inequality (\romannumeral 3) is derived from the fact that $T$ contains at most $n-1$ edges, thus, $k\leq n-1$.
This concludes the proof of (\ref{eq7}). Combining both (\ref{eq6}) and (\ref{eq7}) concludes the proof of the upper bound of the theorem.

\noindent {\bf{Lower Bound:}} Refer to Appendix~\ref{sec:LowerBound_Symmetric}. 
\end{proof}

\section{Approximating Budgeted Additive Functions by Coverage Functions}\label{sec:Budget_by_Coverage}
In this section, we present matching upper and lower bounds for approximating budgeted additive functions by coverage functions (Theorem~\ref{thrm:Bugdet_by_Coverage})\footnote{It is easy to show that a coverage function can be written exactly as a sum of budgeted additive functions.}.
The following lemma from Chakrabarty and Huang~\cite{ChakrabartyH12} provides the alternate representation of coverage functions used in our proof of the lower bound.
\begin{lemma}\label{lem:charac-coverage}\cite{ChakrabartyH12}
A function $f:2^{U}\rightarrow \RR_+$ is a coverage function if and only if there exist reals $x_T\geq 0$ for each $T\subseteq U$ such that $f(S)=\sum_{T:T\cap S\neq \emptyset} x_T$ for each $S\subseteq U$.
\end{lemma}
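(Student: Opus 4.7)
\medskip

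\noindent\textbf{Proposal for Lemma~\ref{lem:charac-coverage}.} The plan is to prove both implications by exhibiting an explicit change of representation between the two descriptions. The key observation is that in Definition~\ref{def:coverage}, each auxiliary element $z \in Z$ is characterized by the subset $T_z := \{ i \in U : z \in A_i\} \subseteq U$ of ground set elements that cover it; grouping auxiliary elements by this signature yields the desired coefficients $x_T$.

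For the forward direction, I would start with a coverage function given by $(Z, w, \{A_i\}_{i\in U})$ and, for each $T \subseteq U$, define
\[
x_T \;=\; \sum_{z \in Z :\, T_z = T} w(z),
\]
which is clearly nonnegative since $w \ge 0$. For any $S \subseteq U$, observe that $z$ is covered by $S$ (i.e.\ $z \in \bigcup_{i \in S} A_i$) if and only if some $i \in S$ satisfies $z \in A_i$, equivalently $T_z \cap S \neq \emptyset$. Partitioning $Z$ according to the value of $T_z$ then gives
\[
f(S) \;=\; \sum_{z :\, T_z \cap S \neq \emptyset} w(z) \;=\; \sum_{T :\, T \cap S \neq \emptyset} x_T,
\]
as required.

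For the backward direction, suppose that nonnegative reals $\{x_T\}_{T \subseteq U}$ are given with $f(S) = \sum_{T : T \cap S \neq \emptyset} x_T$. I would construct an auxiliary ground set $Z = \{ z_T : T \subseteq U\}$ containing one element per subset of $U$, set $w(z_T) = x_T \ge 0$, and define $A_i = \{ z_T : i \in T\}$ for every $i \in U$. Then $z_T \in \bigcup_{i \in S} A_i$ iff $i \in T$ for some $i \in S$, i.e.\ iff $T \cap S \neq \emptyset$. Hence
\[
\sum_{z \in \bigcup_{i \in S} A_i} w(z) \;=\; \sum_{T :\, T \cap S \neq \emptyset} x_T \;=\; f(S),
\]
so $f$ is the coverage function associated with $(Z, w, \{A_i\})$.

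There is no real obstacle here: both directions are bijective bookkeeping, once one notices that the objects parameterizing a coverage function can be re-indexed by the subset of $U$ they are covered by. The only minor point to flag is that the construction in the backward direction uses an auxiliary ground set of size $2^{|U|}$, which is fine because the lemma makes a purely existential (and exponential-size) claim; no polynomial-representation bound is asserted.
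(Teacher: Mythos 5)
Your proof is correct: both directions are the standard re-indexing argument (grouping auxiliary elements by their ``signature'' $T_z=\{i\in U: z\in A_i\}$ in one direction, and introducing one auxiliary element per subset $T$ with weight $x_T$ in the other), which is exactly the characterization used in Chakrabarty--Huang; the paper itself states Lemma~\ref{lem:charac-coverage} with a citation and gives no proof, so there is nothing further to compare. Your closing remark is also right that the exponential-size auxiliary ground set is immaterial, since the lemma is purely existential.
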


\begin{proof}[Proof of Theorem~\ref{thrm:Bugdet_by_Coverage}]
{\bf \hspace*{\fill} \\Upper Bound:} Refer to Appendix~\ref{sec:UpperBound_Budget}.\\
\noindent {\bf Lower Bound:}
We will construct a budgeted additive function which cannot be approximated by coverage functions to factor better than $\frac{e}{e-1}-\epsilon$ for any $\epsilon>0$. We will consider the family of budgeted additive function $f_{k}$, parameterized by the size of domain $|U|=n$ they are defined on, where $n=k^{2}$ for some integer $k$. Under $f_{k}$, all $n=k^{2}$ items have value one and the budget is $k$. Please note that these also constitute a family of uniform matroid rank functions. Therefore,
\vspace{-.15cm}
\begin{align}
f_{k}(S)=\left\{
\begin{array}{ll}
\vert S \vert &  \text{if } \vert S \vert \le k\\
k & \text{o.w.}
\end{array}
\right.
\end{align}

Let $h_{k}$ be a coverage function that gives the maximum value of $\beta$ such that $\forall S\subset [n],~~\beta \cdot f_{k}(S) \le h_{k}(S) \le f_{k}(S)$ and $\alpha_{k}$ be the value of $\beta$ as given by $h_{k}$. Observe that here function $h_k$ is always smaller than the function $f_k$. The function $\frac{h_k}{\beta}$ would give a $\frac{1}{\beta}$-approximation for approximating function $f_k$. This slight change in notation helps for exposition below. We shall show that as $k\to \infty$, $\alpha_{k}$ tends to a value that is at most $1-1/e$. This shall prove our claim.

Using Lemma~\ref{lem:charac-coverage}, we note that $\alpha_k$ can be characterized by a solution to a linear problem $(P)$ given below. Here, the variables are $x_T$, one for each set $T\subseteq U$. The dual (D) of this linear program is given alongside. We will construct a dual solution of value approaching $1-\frac{1}{e}$ as $k\to \infty$. Since every feasible dual solution is an upper bound on $\alpha_{k}$, the result follows.

\newsavebox\primalprogram
\begin{lrbox}{\primalprogram}
\begin{minipage}{0.3\textwidth}
\begin{align*}
&\max \alpha_{k}&\text{(P)}\notag\\
\text{subject to}\notag\\
\forall S \subseteq U, ~~~& \sum_{T \cap S \ne \phi}x_{T} \le f_{k}(S) \notag\\
\forall S \subseteq U, ~~~& \alpha_{k} f_{k}(S) - \sum_{T \cap S \ne \phi}x_{T} \le 0\notag\\
\forall S \subseteq U, ~~~& x_{S} \ge 0\notag
\end{align*}
\end{minipage}
\end{lrbox}

\newsavebox\dualprogram
\begin{lrbox}{\dualprogram}
\begin{minipage}{0.3\textwidth}
\begin{align*}
& \min \sum_{S \subseteq U}f_{k}(S) \cdot u_{S}&\text{(D)}\notag\\
\text{subject to}\notag\\
\forall S \subseteq U, & \sum_{T \cap S \ne \phi} (u_{T}-v_{T}) \ge 0\notag\\
& \sum_{S \subseteq U}f_{k}(S) \cdot v_{S}\ge 1
\end{align*}
\end{minipage}
\end{lrbox}

\begin{table}[h]
\begin{center}
\begin{tabular}{c|c}
\usebox{\primalprogram}&\usebox{\dualprogram}
\end{tabular}
\end{center}
\end{table}

Since, $f(\cdot)$ is symmetric across sets of the same cardinality, we can assume, without loss of generality, that the optimal dual solution is also symmetric. Specifically, the values of the dual variables $u_{T}$ and $v_{T}$ shall depend only the cardinality $\vert T \vert$. Let us write the symmetrized dual program.

\begin{align}
& \min~~ \sum_{j=1}^{k} j \cdot {n \choose j} \cdot u_{j} + \sum_{j=k+1}^{n}k \cdot {n \choose j} \cdot u_{j}&\text{Symmetrized Dual Program}\notag\\
\textrm{subject to}\notag\\
\forall j \in [n], ~~~& \sum_{i=1}^{n}  \left( {n \choose i}-{n-j \choose i} \right)(u_{i}-v_{i}) \ge 0 \label{eq:dual-sym-uv}\\
& \sum_{j =1}^{k} j \cdot {n \choose j} \cdot v_{j} + \sum_{j=k+1}^{n} k \cdot {n \choose j} \cdot v_{j}\ge 1 \label{eq:dual-sym-v-only}
\end{align}

Let $c_{j}$ denote the coefficient of $v_{k}$ in the equation corresponding to set size $j$, i.e., $c_{j}={n \choose k} - {n-j \choose k}$. Further, define $\Delta c_{j} = c_{j+1}-c_{j}$.

 We give the following solution to the dual linear program. Let $v_{k} = \frac{1}{{n \choose k} \cdot k }$, $u_{1}=  \Delta c_{k} \cdot v_{k}$ and $u_{n}= (c_{k} - k \cdot \Delta c_{k}) \cdot v_{k}$. Rest of the variables are set to zero.

We first show that the above solution is feasible for the dual and has objective value that tends to $1-1/e$ as $k \to \infty$.
It is easy to see that with the proposed setting of $v_{k}$, Equation~(\ref{eq:dual-sym-v-only}) is satisfied. To show that Equation~(\ref{eq:dual-sym-uv}) is satisfied, we show that $\forall j \in [n]$, $j \cdot u_{1} + u_{n} \ge ({n \choose k} - {n-j \choose k})v_{k}$. Using our notation, it suffices to show that for all $j\in [n]$, $(j-k) \cdot \Delta c_{k} + c_{k} \ge c_{j}$.

\begin{claim}
$c_{j}$ is an increasing function of $j$ and $\Delta c_{j} = c_{j+1}-c_{j}$ is a decreasing function of $j$.
\end{claim}
\begin{proof}
 $ \Delta c_{j}=c_{j+1}-c_{j}= ({n \choose k} - {n-j-1 \choose k}) -  ({n \choose k} - {n-j \choose k})={n-j \choose k} - {n-j-1 \choose k}={n-j-1 \choose k-1}$.
\end{proof}
\begin{claim}
For all $j\in [n]$, $(j-k) \cdot \Delta c_{k} + c_{k} \ge c_{j}$.
\end{claim}
\begin{proof}
%
\begin{itemize}
\item For $j=k+i$ such that $0\le i \le n-k$: $\text{LHS = } i \cdot \Delta c_{k} + c_{k}  \ge  c_{k+i} = \text{RHS}$
\item For $j=k-i$ with $0\le i \le k$: $\text{LHS = } c_{k} - i \cdot \Delta c_{k} \ge c_{k}-\sum_{q=k-i}^{k-1}\Delta c_{q} = c_{k-i} \cdot v_{k} = \text{RHS}$
\end{itemize}
where the second inequality in both the cases follows because $\Delta c_{j}$ is a decreasing function in $j$.
\end{proof}

Let us now bound the value of the dual objective function. Look at the value that the dual objective function attains with this setting of variables. The function value is $n \cdot u_{1} + k \cdot u_{n} = (n \cdot \Delta c_{k} + k \cdot (c_{k} - k \cdot \Delta c_{k})) \cdot v_{k}$. Since $n=k^{2}$, the dual objective value is equal to $$k \cdot c_{k} \cdot v_{k}= 1 - \frac{{k^{2}-k \choose k}}{{k^{2} \choose k}}.$$ This quantity tends to $1-1/e$ as $k \to \infty$.
\end{proof}

\section{Future Directions}
We mention here a couple of main research questions that are left open by the present work. The first is how well a non-negative monotone submodular function can be approximated by the sum of matroid rank functions. Dughmi et al. \cite{DughmiRY11} show that the Hessian matrix for a matroid rank sum has to be negative semi-definite, and it is easy to come with a budgeted additive function that does not obey this property. Hence, we cannot hope for the best approximation factor for a submodular function by a matroid rank sum to be 1; in fact we can show that the approximation factor cannot be better than some constant bounded away from 1. In terms of positive results, a $O(\sqrt(n))$ factor approximation follows from \cite{Goemans:2009} and a $O(\frac{\max_{e\in U}f(e)}{\min_{e\in U}f(e)})$ follows from a result in Section 44.6(B) in \cite{schrijver:2003}.

The second is approximating a non-negative symmetric submodular function by a hypergraph cut function (in this paper, we only considered graph cut functions). The lower bound example in the paper for graph cut functions can be extended to show that a $r$-regular hypergraph cannot approximate to a factor better than $O(\frac{n}{r})$. In terms of positive results, we know no better than the ones mentioned in this paper.

\bibliographystyle{plain}
\bibliography{submodular-ref}

\appendix

\def \I {\mathcal{I}}
\def \M {\mathcal{M}}

\section{Lower Bound of Theorem~\ref{them:Symmetric_UndirectedCUT}}\label{sec:LowerBound_Symmetric}
\begin{proof}[Lower Bound of Theorem~\ref{them:Symmetric_UndirectedCUT}]
Consider the following symmetric submodular function $f$:
\begin{align*}
f(S)=\left\{
\begin{array}{rl}
1& \;\; \text{if } S\ne \emptyset,U\\
0& \;\; \text{ otherwise }
\end{array}
\right.
\end{align*}
Let $G=(U,E)$ be an edge weighted graph with non-negative weights $w:E\rightarrow \RR_+$ on the edges whose cut function satisfies
 $f(S)\leq w(\delta(S))\leq \theta \cdot f(S)$
 for each set $S\subseteq U$ for some $\theta$. We will show that $\theta \geq \frac{n}{4}$.

For any vertex $v\in U$,
$1=f(\{v\})\leq w(\delta(\{v\}))$.
Thus, the total weight of edges in $G$ is at least $\frac{1}{2}\sum_{v\in U}w(\delta \left( \left\{ v\right\}\right))\geq  \frac{n}{2}$.
Every undirected graph has a non-trivial cut that contains at least half the total weight of edges in the graph, thus, there exists a cut $S\subseteq U$, $S\neq \emptyset,U$, where $w(\delta(S)) \geq \frac{n}{4}$. The existence of such a cut can be shown by picking a cut at random where each vertex is in $S$ with probability $\frac{1}{2}$ independently. The expected weight of the cut will be exactly half the total weight of all edges. Now, we have $ \frac{n}{4}\leq w(\delta(S)) \leq \theta \cdot f(S) =\theta$,
concluding the proof of the lower bound of the theorem. 
\end{proof}

\section{Upper Bound of Theorem~\ref{thrm:Bugdet_by_Coverage}}\label{sec:UpperBound_Budget}
\begin{proof}[Upper Bound of Theorem~\ref{thrm:Bugdet_by_Coverage}]
Consider any budgeted additive function $f(\cdot)$ over some domain $U$, with budget $B$ and the values of the elements be denoted by $v_{1}, v_{2}, \cdots v_{n}$ where $n=|U|$. Without loss of generality, we assume all these values to be integers.
Take an auxiliary ground set $G$ of size $B$. For each $i\in U$, construct a set $A_{i} \subseteq G$, formed by choosing $v_{i}$ points (with replacement) at random from $G$. Consider function $g:2^{U} \rightarrow \mathbb{Z}$, defined as $g(S) = \vert\cup_{i\in S}A_{i}\vert$ for all $S\subseteq U$.

By definition, $g(\cdot)$ is a coverage function. Furthermore, it is easy to see that for all $S\subseteq U$, $g(S) \le f(S)$. We now show that $\mathbb{E}[g(S)] \ge (1-1/e) \cdot f(S)$, where the expectation is taken over the randomness of the procedure described to construct $g(S)$. Note that $g'(\cdot)=\mathbb{E}[g(\cdot)]$ is a coverage function.
Consider any set $S \subseteq U$. Let $f(S) = V$, i.e., $\sum_{i\in S}v_{i}=V$. Consider the case when $V<B$. Consider any point in auxiliary ground set $G$. The probability that this point is not covered by any of the sets $A_{i}$ for $i \in S$ is at most $ (1-1/B)^V$. Hence, the expected value of $\vert\cup_{i \in S}A_{i}\vert$ is at least $B \cdot (1-(1-1/B)^{V})\ge B \cdot (1-e^{-V/B}) \ge (1-1/e) \cdot V$. Here we use the inequality $1-e^{-x} \ge (1-1/e)\cdot x$. Hence, $\vert\cup_{i \in S}A_{i}\vert \ge (1-1/e) \cdot f(S)$. The proof for the case when $V=B$ is similar.
Thus for each set $S\subseteq U$, we have

\[(1-\frac{1}{e})f(S)\leq g'(S)\leq f(S) .\]
Thus, we obtain the function $\frac{e}{e-1}g'(\cdot)$ approximates $f$ within a factor of $\frac{e}{e-1}$. 
\end{proof}

\section{Proof of Lemma~\ref{lem:Submodular_Intersection}}
\label{sec:omitted-proofs}
\begin{proof}[Proof of Lemma~\ref{lem:Submodular_Intersection}]
The following inequalities are derived from the definition of submodularity:

\begin{align*}
f(A_{1}) + f(A_{2}) &\ge f(A_{1} \cap A_{2}) + f(A_{1} \cup A_{2})\\
f(A_{3}) + f(A_{1}\cap A_{2}) &\ge f(A_{1} \cap A_{2} \cap A_{3}) + f((A_{1} \cap A_{2})\cup A_{3})\\
&~\vdots\\
f(A_{n}) + f(\cap_{i=1}^{n-1}A_{i}) &\ge f(\cap_{i=1}^{n}A_{i}) + f((\cap_{i=1}^{n-1}A_{i})\cup A_{n})
\end{align*}

Summing up the above inequalities and canceling common terms on the two sides and using the fact that $f$ is non-negative we obtain that
$\sum_{i=1}^{n}f(A_{i}) \ge f(\cap_{i=1}^{n}A_{i})$.
\end{proof}

\section{Uniform Submodular and Matroid Rank Functions}
\begin{definition}[Uniform Submodular Function]
A submodular function is said to be \emph{uniform} if the value it takes on a set depends only on the cardinality of the set.
\end{definition}
\begin{lemma}
\label{lem:uniform-matroid-and-concave}
Any non-negative, integer-valued, monotone, uniform, submodular function is $1$-approximated by a sum of uniform matroid rank functions, and hence by a sum of budgeted additive functions.
\end{lemma}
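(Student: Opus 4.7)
Since $f$ is uniform, write $f(S)=\phi(|S|)$ for some $\phi:\{0,1,\ldots,n\}\to\mathbb{Z}_{\ge 0}$ with $\phi(0)=0$. The first step is to translate the hypotheses on $f$ into combinatorial conditions on $\phi$. Monotonicity of $f$ makes $\phi$ non-decreasing, and submodularity applied to sets $A,B$ with $|A|=|B|=k$, $|A\cap B|=k-1$, $|A\cup B|=k+1$ gives $2\phi(k)\ge \phi(k+1)+\phi(k-1)$. Thus the forward differences $d_k:=\phi(k)-\phi(k-1)$ are non-increasing, and together with integer-valuedness of $f$ we get $d_1\ge d_2\ge\cdots\ge d_n\ge 0$, all non-negative integers.

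Next, I would introduce the second differences $c_k:=d_k-d_{k+1}$ for $1\le k<n$ and $c_n:=d_n$, which are non-negative integers. The main computational step is to verify the identity
\[
\phi(s) \;=\; \sum_{k=1}^{n} c_k\cdot \min\{s,k\} \qquad \text{for every } 0\le s\le n.
\]
This is done by splitting the right-hand side according to whether $k\le s$ or $k>s$, substituting $c_k=d_k-d_{k+1}$, and telescoping: the $k>s$ part collapses to $s\cdot d_{s+1}$, the $k\le s$ part collapses to $\sum_{j=1}^{s} d_j - s\cdot d_{s+1}$, and the two combine to $\sum_{j=1}^{s} d_j = \phi(s)$.

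Finally, I would observe that for each $k$, the function $S\mapsto \min\{|S|,k\}$ is precisely the rank function of the uniform matroid $U_{k,n}$, and is also the budgeted additive function obtained by assigning weight $1$ to every ground element and setting the budget to $k$. Hence the displayed identity writes $f$ as a non-negative integer combination---equivalently, a sum with multiplicities---of uniform matroid rank functions (and so also of budgeted additive functions), giving a $1$-approximation.

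I do not expect any real obstacle: the content of the lemma is essentially the standard fact that a non-decreasing concave sequence on $\{0,\ldots,n\}$ starting at $0$ is a non-negative combination of the ``hockey-stick'' sequences $s\mapsto \min\{s,k\}$. The only mild subtlety is that integrality of $\phi$ guarantees the coefficients $c_k$ are integers, which is what lets us interpret the decomposition as a literal sum (with repetitions) rather than merely a non-negative linear combination.
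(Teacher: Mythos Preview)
Your proposal is correct and follows essentially the same approach as the paper: both arguments use the second differences $c_k=2\phi(k)-\phi(k-1)-\phi(k+1)$ (with $c_n=\phi(n)-\phi(n-1)$) as the coefficients of the uniform matroid rank functions $S\mapsto\min\{|S|,k\}$, and both identify these with budgeted additive functions of unit weights and budget $k$. Your write-up is in fact a bit more explicit than the paper's---you spell out why submodularity of $f$ forces concavity of $\phi$ and give the telescoping verification directly---but the underlying decomposition is identical.
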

\begin{proof}
Consider an integer-valued, non-negative, monotone, uniform, submodular function $f(\cdot)$ over the universe $[n]$ and let $f_{k}$ be the value $f$ takes for sets $S$ of cardinality $k$. Consider uniform matroid rank functions $g_{1}, g_{2}, \cdots g_{n}$ where
\begin{align}
g_{i}(S) = \left\{ \begin{array}{ll}
\vert S \vert & \vert S \vert \le i\\
i & \vert S \vert > i
\end{array}\right.
\end{align}

We claim that there exist a set of $\alpha_{i}$'s, such that $\alpha_{i}\ge 0$ for all $i \in [n]$ such that
\begin{equation}
\forall j \in [n], f_{j} = \sum_{i=1}^{k} \alpha_{i} \cdot j + \sum_{i=k+1}^{n} \alpha_{i} \cdot i
\end{equation}
It is easy to see that the above claim implies that $f(S) = \sum_{i} \alpha_{i}\cdot g_{i}(S)$ for all $S \subseteq [n]$.

Now, we prove the claim. If, for every $j \in [n-1]$, we substract equation $j$ from $j+1$, we get the following set of equations
\begin{equation}
\forall j\in [n-1], f_{j+1} - f_{j} = \sum_{i=j+1}^{n}\alpha_{i}
\end{equation}

From here, we can see that the following assignments to $\alpha_{i}$ is a valid solution to the above equations. Set $\alpha_{1}=2\cdot f_{2} - f_{1}, \alpha_{n}=f_{n}- f_{n-1}$ and for $i \notin \{1,n\}$, set $\alpha_{i} = 2 \cdot f_{i} - f_{i+1} -f_{i-1}$. All the $\alpha_{i}$'s are positive since $f$ is monotone and $f$ is submodular.

Finally, it is easy to see that every uniform matroid rank function is also a budgeted additive function with all elements having value one, and the budget equal to the rank of the matroid. 
\end{proof}

%
%
%
%

\section{Application to Online Submodular Function Maximization}\label{sec:application}

We consider the problem of online submodular function maximization as studied by Buchbinder et. al~\cite{BuchbinderNRS12}. We are given a universe $U$ and matroid $\M=(U,\I)$. In an online manner, at each step for $1\leq i \leq m$, we are given a monotone submodular function $f_i:2^{U}\rightarrow \RR_+$. The goal is to maintain an independent set $F_i\in \I$ at any step $i$ such that $F_i\subseteq F_{i+1}$. The objective value to maximize is $\sum_{i=1}^m f_i(F_i)$. As in the notion of competitive analysis, any algorithm is compared to the best offline optimum $\max_{O\in \I} \sum_{i=1}^m f_i(O)$.

Buchbinder et. al~\cite{BuchbinderNRS12} give a $O(\log^2 n\log m\log f_{ratio})$-competitive algorithm when each of the submodular function is weighted matroid rank function where $f_{ratio}=\frac{\max_{i,a}f_i(\{a\})}{\min_{i,a:f_i(\{a\})>0}f_i(\{a\})}$. In particular, the result applies when each of the functions $f_i$ is a coverage function.

Using the fact every monotone submodular function can be approximated by a coverage function to a factor of $O(\sqrt{n}\log n)$, we directly obtain the following corollary.

\begin{corollary}
There is a $O(\sqrt{n}\log^3 n\log m f_{ratio})$-competitive online algorithm for the online submodular function maximization problem when each of the submodular functions is an arbitrary monotone submodular function.
\end{corollary}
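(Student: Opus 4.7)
The plan is to use the approximation of monotone submodular functions by coverage functions (attributed to Goemans et al.\ and discussed earlier in the paper), combined as a black-box reduction with the online algorithm of Buchbinder et al.\ \cite{BuchbinderNRS12}. Concretely, as each monotone submodular $f_i$ arrives, I would first compute (in polynomial time from the value oracle) a coverage function $g_i$ satisfying $f_i(S)\le g_i(S)\le \theta\cdot f_i(S)$ for every $S\subseteq U$, where $\theta=O(\sqrt{n}\log n)$. I would then feed $g_i$ to the online algorithm of \cite{BuchbinderNRS12}, which maintains an ascending chain of independent sets $F_1\subseteq F_2\subseteq\dots$ in $\M$ and guarantees
\[
\sum_i g_i(F_i)\;\ge\;\frac{1}{\beta}\max_{O\in\I}\sum_i g_i(O),\qquad \beta=O(\log^2 n\log m\log g_{ratio}),
\]
with $g_{ratio}$ defined analogously to $f_{ratio}$ but for the sequence $g_1,\ldots,g_m$.

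To translate the guarantee back to the original sequence, I would invoke the two-sided approximation in both directions. The inequality $g_i\le \theta\cdot f_i$ implies $\sum_i f_i(F_i)\ge \frac{1}{\theta}\sum_i g_i(F_i)$, so the algorithm's true value on $f$ is at most a factor $\theta$ smaller than its reported value on $g$. The inequality $f_i\le g_i$ implies that for every independent set $O$, $\sum_i f_i(O)\le \sum_i g_i(O)$, so the offline optimum can only grow when we pass from $f$ to $g$. Chaining these two inequalities with the competitive guarantee yields
\[
\sum_i f_i(F_i)\;\ge\;\frac{1}{\theta\cdot \beta}\,\max_{O\in\I}\sum_i f_i(O).
\]

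It remains to control $g_{ratio}$ in terms of $f_{ratio}$: applied to singletons, the two-sided bound gives $f_i(\{a\})\le g_i(\{a\})\le \theta\cdot f_i(\{a\})$, which forces $g_{ratio}\le \theta^2\cdot f_{ratio}$ and hence $\log g_{ratio}=O(\log n+\log f_{ratio})$. Multiplying $\theta=O(\sqrt n\log n)$ by $\beta=O(\log^2 n\log m\log g_{ratio})$ produces the stated competitive ratio $O(\sqrt n\log^3 n\log m\log f_{ratio})$. There is no substantial obstacle in this argument beyond careful bookkeeping; the one thing to be cautious about is that the two directions of the approximation are applied to different sides of the competitive inequality (the upper bound $g_i\le \theta f_i$ bounds the algorithm's value from below, while the lower bound $f_i\le g_i$ bounds the offline optimum from above), and the $\theta$ loss therefore compounds multiplicatively with the competitive ratio $\beta$ of the black-box algorithm rather than additively.
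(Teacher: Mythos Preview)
Your proposal is correct and follows essentially the same approach as the paper: the paper merely states that the corollary follows ``directly'' from the $O(\sqrt n\log n)$ approximation of monotone submodular functions by coverage functions combined with the Buchbinder et al.\ guarantee, and you have spelled out exactly that black-box reduction. Your bookkeeping is in fact more careful than the paper's one-line justification (in particular your explicit handling of $g_{ratio}$; note that $g_{ratio}\le \theta\cdot f_{ratio}$ already holds, so your $\theta^2$ is a slight overestimate, though it makes no difference asymptotically).
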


\section{Approximating Monotone Submodular Functions by Coverage Functions and by Budgeted Additive Functions}
\label{sec:monotone-by-budgeted}
The two main results of the section are the following.
\begin{theorem}\label{thrm:Monotone_by_Budget}
Coverage functions can approximate every non-negative monotone submodular function to within a factor $O\left(\sqrt{n}\log{n}\right)$.
Additionally, the class of coverage functions cannot approximate every non-negative monotone submodular function to a factor within $o\left( \frac{n^{1/3}}{\log^2{n}}\right)$.
\end{theorem}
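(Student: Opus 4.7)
The plan is to derive both bounds by composing existing results with the preceding theorems of this paper, so no new combinatorial construction is needed. For the upper bound I would chain two approximations. First, as the introduction observes, the algorithm of Goemans et al.~\cite{Goemans:2009} implicitly yields, for every non-negative monotone submodular $f$, a budgeted additive function $h$ with $f(S) \le h(S) \le O(\sqrt{n}\log n)\cdot f(S)$ for every $S \subseteq U$: their ellipsoidal approximation produces a function of the form $\sqrt{a^{\top}1_{S}}$ for a non-negative vector $a$, and a standard rescaling places this within a constant factor of a budgeted additive function $\min\{B,\sum_{i\in S} b_{i}\}$ for suitably chosen $B$ and $b$. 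Second, I invoke Theorem~\ref{thrm:Bugdet_by_Coverage} applied to $h$ to obtain a coverage function $g$ with $h(S) \le g(S) \le \tfrac{e}{e-1}\,h(S)$. Composing the two inequalities gives $f(S) \le g(S) \le O(\sqrt{n}\log n)\cdot f(S)$, absorbing the constant $\tfrac{e}{e-1}$ into the $O(\cdot)$.

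For the lower bound, the plan is a reduction using the sketches of Badanidiyuru et al.~\cite{Badanidiyuru:2012} together with the learning lower bound of Balcan and Harvey~\cite{Balcan:2010}. Suppose, toward contradiction, that coverage functions $\theta$-approximate every non-negative monotone submodular function with $\theta = o(n^{1/3}/\log^{2}n)$. By \cite{Badanidiyuru:2012}, every coverage function admits an explicit $\operatorname{poly}(n, 1/\varepsilon)$-bit sketch that approximates it pointwise to within a factor $1+\varepsilon$, for any $\varepsilon>0$. Chaining this with the hypothesized $\theta$-approximation, every monotone submodular $f$ is represented by a polynomial-length bit string $\sigma(f)$ whose induced function $\tilde f$ satisfies $f(S)\le \tilde f(S) \le \theta(1+\varepsilon)f(S)$ for every $S$. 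Since only $2^{\operatorname{poly}(n)}$ such sketches exist, a learner who sees polynomially many value queries can identify one consistent with the samples and output the corresponding $\tilde f$; this yields a polynomial-sample algorithm that approximates any monotone submodular $f$ within factor $\theta(1+\varepsilon)(1+o(1))$ on a $1-o(1)$ mass of any prescribed distribution, contradicting the $\Omega(n^{1/3}/\log^{2}n)$ learnability lower bound implied by Balcan--Harvey.

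The main obstacle is the quantitative tightness of the sketch-to-learning reduction. Balcan--Harvey's bound is stated in a distributional (PMAC) model, whereas the hypothesized approximation is pointwise, so some care is needed to verify that the polynomial-sample learner indeed extracts PMAC guarantees with only a polylogarithmic loss; this passage is precisely where the $\log^{2}n$ in the denominator appears, and it is carried out by Badanidiyuru et al.\ when they transport the learnability bound of \cite{Balcan:2010} into a sketch-size lower bound. A smaller issue is choosing $\varepsilon$ and the sample budget so that the composite sketch remains of polynomial size; taking $\varepsilon$ to be any fixed constant is enough, since the final bound is asymptotic. Beyond these points the argument is a clean composition, explaining why the paper classifies this theorem as a corollary of prior work rather than a new contribution.
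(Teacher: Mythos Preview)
Your upper-bound argument contains a genuine gap. The Goemans et al.\ approximation is of the form $g(S)=\sqrt{\sum_{i\in S}a_i}$, and this is \emph{not} within a constant factor of any single budgeted additive function. Take $a_i\equiv 1$, so $g(S)=\sqrt{|S|}$; any $\min\{B,\sum_{i\in S}b_i\}$ that dominates $g$ on singletons must have $b_i\ge 1$, and to dominate $g(U)=\sqrt{n}$ it needs $B\ge\sqrt{n}$, but then on a set of size $\sqrt{n}$ it outputs at least $\sqrt{n}$ while $g$ outputs $n^{1/4}$, a ratio of $n^{1/4}$. The paper therefore takes a different route: it shows (Lemma~\ref{lem:budget-additive-weighted-concave}) that any function of the form $h(\sum_{i\in S}a_i)$ with $h$ concave monotone can be written \emph{exactly} as a \emph{sum} of budgeted additive functions, by decomposing $h$ along uniform matroid rank functions. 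Theorem~\ref{thrm:Bugdet_by_Coverage} is then applied term-by-term, and since coverage functions are closed under sums the result is a single coverage function. So the missing idea is this concave-to-sum-of-budgeted decomposition; your ``standard rescaling'' step would have to be replaced by it.

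Your lower-bound plan is close in spirit to the paper's, but the paper uses a cleaner direct counting argument rather than passing through the PMAC model. It invokes the explicit matroid family from Balcan--Harvey (Lemma~\ref{lem:balcan-harvey}) with $k=2^{\log^2 n}$, giving $2^{n^{\log n}}$ rank functions any two of which differ by a factor $\Omega(n^{1/3}/\log^2 n)$ on some set. Combined with the $(1-\varepsilon)$-sketchability of coverage functions, there are only $2^{\mathrm{poly}(n)}$ possible sketches, so by pigeonhole two distinct matroids in the family share the same sketch and hence cannot both be $o(n^{1/3}/\log^2 n)$-approximated. This avoids the detour through a learning algorithm and makes the origin of the $\log^2 n$ transparent: it comes from the value $8\log k$ in the Balcan--Harvey construction, not from any sample-complexity accounting.
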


\begin{theorem}
The class of sum of budgeted additive functions can approximate every non-negative monotone submodular function to a factor within $\left( \sqrt{n}\log{n}\right)$.
Additionally, the class of sum of budgeted additive functions cannot approximate every non-negative monotone submodular function to a factor within $o \left( \frac{n^{1/3}}{\log^2{n}}\right)$.
\end{theorem}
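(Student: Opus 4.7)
The plan is to reduce both halves of the statement to results already proved in the paper and to the prior works cited in the Related Work section.

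\textbf{Upper bound.} The previous theorem in this appendix shows that coverage functions $O(\sqrt{n}\log n)$-approximate every non-negative monotone submodular function. It then suffices to observe that every coverage function is a finite sum of budgeted additive functions. By Lemma~\ref{lem:charac-coverage} any coverage function $g$ can be written as $g(S)=\sum_{T\subseteq U} x_T\cdot \mathbf{1}[T\cap S\neq\emptyset]$ with $x_T\geq 0$. For each $T$ with $x_T>0$, the summand $x_T\cdot \mathbf{1}[T\cap S\neq \emptyset]$ is exactly the budgeted additive function obtained by setting $a_i=x_T$ for $i\in T$, $a_i=0$ for $i\notin T$, and budget $B=x_T$: a direct case check on $|S\cap T|$ confirms $\min(B,\sum_{i\in S}a_i)=x_T\cdot \mathbf{1}[T\cap S\neq\emptyset]$. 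Summing, $g$ is a sum of budgeted additive functions and the approximation bound transfers.

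\textbf{Lower bound.} The plan is to combine the information-theoretic hardness of Balcan and Harvey~\cite{Balcan:2010} with the sketching framework of Badanidiyuru et al.~\cite{Badanidiyuru:2012}. First I would argue a compression step: any sum of budgeted additive functions admits a polynomial-size approximate sketch losing at most an $O(\log n)$ multiplicative factor, by bucketing the summands according to the orders of magnitude of their budgets $B$ and, within each of the $O(\log n)$ relevant buckets, retaining only polynomially many representatives while absorbing the discarded tail into the approximation slack. Second I would invoke Balcan and Harvey's construction of a monotone Lipschitz submodular function that cannot be $o(n^{1/3})$-approximated by any function with a polynomially-sized description. Composing the two multiplicative losses delivers the $\Omega(n^{1/3}/\log^2 n)$ hardness.

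\textbf{Main obstacle.} The principal difficulty is in the sketching step of the lower bound. Sums of budgeted additive functions strictly extend the class of coverage functions: the rank function of the uniform matroid $U_{2,3}$, for example, is budgeted additive but, as a short linear-programming check shows, is not a coverage function. Consequently Badanidiyuru et al.'s sketching bound for coverage functions is not directly applicable, and one must verify that the bucketing-and-rounding scheme respects the $\min$ truncation inside each budgeted additive summand so that per-summand discretization errors do not compound multiplicatively across the sum beyond an $O(\log n)$ factor. Once the sketch complexity bound is in place, the reduction to the Balcan--Harvey indistinguishability construction is essentially mechanical.
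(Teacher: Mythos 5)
Your upper-bound decomposition (each summand $x_T\cdot\mathbf{1}[T\cap S\neq\emptyset]$ is a budgeted additive function with weights $x_T$ on $T$ and budget $x_T$) is correct and matches the paper's footnote observation. However, deriving the budgeted additive bound from Theorem~\ref{thrm:Monotone_by_Budget} is circular in this paper's logical structure: the paper proves the coverage upper bound (Corollary~\ref{crl:Coverage-approx-monotone}) \emph{from} the budgeted additive upper bound (Lemma~\ref{lem:Budget-approx-monotone}), not the other way around. The paper's actual route avoids coverage functions entirely: it invokes Lemma~\ref{lem:goemans} from Goemans et al., which yields a function of the form $g(S)=\sqrt{\sum_{e\in S}a_e}$ that $\sqrt{n}\log n$-approximates $f$, and then uses Lemma~\ref{lem:budget-additive-weighted-concave} to decompose $g$ exactly as a nonnegative combination of uniform matroid rank functions, hence of budgeted additive functions. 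You would need to establish the coverage bound independently for your direction to go through, and no such independent proof is on offer.

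The lower bound is where the real gap lies. You correctly observe that sums of budgeted additive functions strictly contain coverage functions (your $U_{2,3}$ example is valid), and you correctly conclude that Badanidiyuru et al.'s sketching result for coverage cannot be cited verbatim. But the remedy you propose --- a new, bucketing-based sketching theorem for sums of budgeted additive functions losing only $O(\log n)$ --- is neither proven nor needed, and as you yourself concede it is the main obstacle. The paper sidesteps it via a simple reduction using Theorem~\ref{thrm:Bugdet_by_Coverage} in the \emph{other} direction from your $U_{2,3}$ observation: every budgeted additive function is $\tfrac{e}{e-1}$-approximated by a coverage function, and coverage functions are closed under summation, so if a sum of budgeted additive functions $\alpha$-approximated $f$, some coverage function would $\tfrac{e}{e-1}\alpha$-approximate $f$. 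The established coverage lower bound $\Omega(n^{1/3}/\log^2 n)$ (Lemma~\ref{lem:coverage-cannot-appx-montone-well}, which does use Badanidiyuru's sketch plus Balcan--Harvey) then forces $\alpha=\Omega(n^{1/3}/\log^2 n)$. In short, the obstacle you flag is not one you need to overcome: the forward approximation in Theorem~\ref{thrm:Bugdet_by_Coverage} makes your proposed new sketching result superfluous.
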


\subsection{Upper Bound}
For the upper-bound, we show that budgeted additive functions can $\sqrt{n}\log(n)$-approximate the class of non-negative, monotone, submodular functions. Then we use Theorem~\ref{thrm:Bugdet_by_Coverage} to infer that the coverage functions  too can give approximately the same guarantee.
\begin{lemma}
\label{lem:Budget-approx-monotone}
The class of sum of budgeted additive functions can approximate every non-negative monotone submodular function to within a factor $\left(\sqrt{n}\log{n}\right)$.
\end{lemma}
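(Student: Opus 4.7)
The plan is to chain the existing approximation of monotone submodular functions by square-roots of modular functions, due to Goemans et al.~\cite{Goemans:2009}, with a simple combinatorial decomposition that rewrites such a square-root as a sum of budgeted additive functions (the introduction already flagged that such an approximation is implicit in that paper). Concretely, I would begin by invoking the main result of \cite{Goemans:2009}: for any nonnegative monotone submodular $f$ on $U$ with $|U|=n$, there is a nonnegative modular function $w:U\to\RR_+$ and a constant $\alpha=O(\sqrt{n}\log n)$ such that $\sqrt{w(S)}\leq f(S)\leq \alpha\sqrt{w(S)}$ for every $S\subseteq U$. It therefore suffices to approximate $\alpha\sqrt{w(\cdot)}$ from above, within a constant factor, by a sum of budgeted additive functions.

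Next, I would approximate $\sqrt{\cdot}$ on the finite range relevant here by a piecewise linear concave nondecreasing $\tilde\phi:\RR_+\to\RR_+$. Let $x_0$ be the smallest positive value of $w(S)$ over $S\subseteq U$ and $x_{\max}=w(U)$; on $\{0\}\cup[x_0,x_{\max}]$, geometrically spaced breakpoints $x_0=y_1<y_2<\cdots<y_K=x_{\max}$ with ratio $2$ yield $\tilde\phi$ with $\tilde\phi(0)=0$ satisfying $\tilde\phi(x)\leq\sqrt{x}\leq 2\tilde\phi(x)$ on that range, using only $K=O(\log(x_{\max}/x_0))$ pieces (so polynomial in the bit-length of $f$). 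Since $w(S)$ always lies in $\{0\}\cup[x_0,x_{\max}]$, this gives a factor-$2$ pointwise approximation of $\sqrt{w(\cdot)}$ on the hypercube, and moreover $f(S)=0$ whenever $w(S)=0$, so the edge case at zero is harmless.

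Finally, I would expand $\tilde\phi(w(\cdot))$ explicitly as a sum of budgeted additives. If $\tilde\phi$ has breakpoints $0=y_0<y_1<\cdots<y_K$ and slopes $s_1\geq s_2\geq\cdots\geq s_{K+1}=0$ on the successive intervals, the standard decomposition of a piecewise linear concave nondecreasing function with $\tilde\phi(0)=0$ gives
\[
\tilde\phi(x)=\sum_{j=1}^{K}(s_j-s_{j+1})\,\min(x,y_j),
\]
with every coefficient $s_j-s_{j+1}\geq 0$. Substituting $x=w(S)=\sum_{i\in S}w_i$, each summand
\[
(s_j-s_{j+1})\min\!\bigl(w(S),y_j\bigr)=\min\!\Bigl((s_j-s_{j+1})y_j,\;\sum_{i\in S}(s_j-s_{j+1})w_i\Bigr)
\]
is a budgeted additive function on $U$, so $g(S):=2\alpha\,\tilde\phi(w(S))$ is a sum of $K$ budgeted additive functions satisfying $f(S)\leq \alpha\sqrt{w(S)}\leq g(S)\leq 2\alpha\sqrt{w(S)}\leq 2\alpha\cdot f(S)=O(\sqrt n\log n)\cdot f(S)$. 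The only genuine obstacle is the first step, which we outsource to the nontrivial John-ellipsoid argument of \cite{Goemans:2009}; the remaining decomposition is elementary, and the algorithmic content of the result (finding the sum of budgeted additives in polynomial time from a value oracle) follows because the Goemans et al.\ algorithm runs in polynomial time and the piecewise linear expansion is explicit.
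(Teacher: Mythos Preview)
Your proof is correct and follows essentially the same route as the paper: invoke the Goemans et al.\ square-root-of-modular approximation, then express the resulting concave-of-modular function as a nonnegative combination of $\min(\,\cdot\,,B)$ functions. The paper phrases the second step via uniform matroid rank functions under an integrality assumption (Lemmas~\ref{lem:budget-additive-weighted-concave} and~\ref{lem:uniform-matroid-and-concave}), whereas you use an explicit piecewise-linear approximation of $\sqrt{\cdot}$ and incur an extra constant factor, but the underlying decomposition is the same.
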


The following corollary follows easily from Lemma~\ref{lem:Budget-approx-monotone} and Theorem~\ref{thrm:Bugdet_by_Coverage}
\begin{corollary}
\label{crl:Coverage-approx-monotone}
The class of sum of coverage functions can approximate every non-negative monotone submodular function to factor $O\left(\sqrt{n}\log{n}\right)$.
\end{corollary}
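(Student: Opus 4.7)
The plan is to compose the two stated approximation results in the obvious way. Given a non-negative monotone submodular function $f \colon 2^U \to \RR_+$, Lemma~\ref{lem:Budget-approx-monotone} furnishes a sum of budgeted additive functions $g = \sum_{i=1}^{k} g_i$ such that
\[
f(S) \;\le\; g(S) \;\le\; \sqrt{n}\log n \cdot f(S) \qquad \text{for all } S \subseteq U.
\]
The idea is to replace each summand $g_i$ by a coverage function $h_i$ that approximates it to within a constant factor, using Theorem~\ref{thrm:Bugdet_by_Coverage}.

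Concretely, in the second step I would invoke Theorem~\ref{thrm:Bugdet_by_Coverage} on each $g_i$ to obtain a coverage function $h_i$ satisfying
\[
g_i(S) \;\le\; h_i(S) \;\le\; \tfrac{e}{e-1}\, g_i(S) \qquad \text{for all } S \subseteq U.
\]
Setting $h = \sum_{i=1}^{k} h_i$, the function $h$ is a sum of coverage functions, and summing the above inequalities over $i$ preserves the two-sided bound:
\[
g(S) \;\le\; h(S) \;\le\; \tfrac{e}{e-1}\, g(S).
\]
Chaining this with the bound on $g$ from Lemma~\ref{lem:Budget-approx-monotone} yields
\[
f(S) \;\le\; h(S) \;\le\; \tfrac{e}{e-1}\sqrt{n}\log n \cdot f(S) \;=\; O(\sqrt{n}\log n)\cdot f(S),
\]
which is exactly the desired approximation.

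There is no real obstacle to overcome here; the proof is a two-line composition argument, and the only step worth noting is that the sandwich inequality $g_i(S) \le h_i(S) \le \tfrac{e}{e-1} g_i(S)$ is additive across $i$ because all functions involved are non-negative. As a remark, since a non-negative combination of coverage functions is itself a coverage function (take the disjoint union of the auxiliary ground sets and concatenate the weight functions), the result can equivalently be phrased in terms of a single coverage function rather than a sum of coverage functions.
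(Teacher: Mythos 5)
Your proof is correct and is precisely the composition the paper has in mind when it says the corollary ``follows easily from Lemma~\ref{lem:Budget-approx-monotone} and Theorem~\ref{thrm:Bugdet_by_Coverage}.'' The observation that a non-negative sum of coverage functions is again a coverage function is also correct and a nice strengthening of the statement.
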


The proof of Lemma~\ref{lem:Budget-approx-monotone} follows from Lemmas~\ref{lem:goemans} and~\ref{lem:budget-additive-weighted-concave}. Lemma~\ref{lem:goemans} \cite{Goemans:2009} gives a particular function that $\sqrt{n}\log(n)$-approximates a general monotone, non-negative, sub-modular function, and Lemma~\ref{lem:budget-additive-weighted-concave} implies that this approximating function can be written as a sum of budgeted additive functions.

\begin{lemma}\cite{Goemans:2009}
\label{lem:goemans}
For every monotone submodular function $f:2^{U}\rightarrow R_+$, there exists positive reals $a_e$ for each $e\in E$ such that $g:2^{U}\rightarrow R_+$ defined as $g(S)=\sqrt{\sum_{e\in S} a_e}$, approximates $f$ within factor $\sqrt{n}\log(n)$.
\end{lemma}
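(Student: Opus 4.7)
The lemma is a theorem of Goemans, Harvey, Iwata and Mirrokni~\cite{Goemans:2009}, and the plan is to reconstruct their ellipsoidal approximation. The bridge between submodular functions and convex geometry is the polymatroid $P_f=\{x\in\mathbb{R}_+^n : x(S)\le f(S)\text{ for all }S\subseteq U\}$: by Edmonds' theorem, for monotone submodular $f$ one has $f(S)=\max\{\sum_{e\in S} x_e : x\in P_f\}$, so $f$ coincides with the support function of $P_f$ on indicator vectors $\mathbf{1}_S\in\{0,1\}^n$. Any multiplicative sandwich $E\subseteq P_f\subseteq \alpha\cdot E$ of $P_f$ by a convex body $E$ therefore induces the same-factor sandwich between the two support functions restricted to $\{0,1\}^n$. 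The target class matches up nicely: functions of the form $S\mapsto \sqrt{\sum_{e\in S} a_e}$ are exactly the support functions, on $\{0,1\}$-vectors, of axis-aligned ellipsoids $E_a=\{x\in\mathbb{R}^n : \sum_e x_e^2/a_e\le 1\}$. So the problem reduces to approximating $P_f$ by an axis-aligned ellipsoid within factor $\alpha=\sqrt{n}\log n$.

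I would build the weights $a_e$ in two stages. First, a John-type argument (exploiting that $P_f\subseteq \prod_e[0,f(\{e\})]$ and is downward closed) yields a centrally symmetric ellipsoid $E$ with $E\subseteq P_f\subseteq \sqrt{n}\cdot E$. Second, one approximates $E$ by an axis-aligned ellipsoid at the cost of an additional $O(\log n)$ factor, by dyadically grouping the eigenvalues of the associated quadratic form; reading off the diagonal entries as $a_e$ and rescaling yields $g(S)=\sqrt{\sum_{e\in S}\alpha^2 a_e}$ satisfying $f(S)\le g(S)\le \alpha\cdot f(S)$. A natural algorithmic implementation computes the $a_e$ from polynomially many value queries along random maximal chains $\emptyset=S_0\subsetneq S_1\subsetneq\cdots\subsetneq S_n=U$, as Goemans et al.\ describe.

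The main obstacle is the axis-alignment step: a mere $O(\log n)$ loss is not automatic from generic convex geometry, which would cost $\sqrt{n}$ and compound to an unacceptable $n$ overall. Goemans et al.\ handle this by exploiting the special polymatroid structure together with concentration estimates on $f$ over random sets of a given cardinality, and matroid rank functions even admit the sharper bound $\sqrt{n+1}$ by a more direct polytope argument. Reproducing their algorithm and analysis in detail is the bulk of the work; since Lemma~\ref{lem:goemans} is cited only to drive the downstream Lemma~\ref{lem:Budget-approx-monotone} and Corollary~\ref{crl:Coverage-approx-monotone}, I would import it as a black box rather than re-derive it in full.
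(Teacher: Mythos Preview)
Your conclusion matches the paper exactly: Lemma~\ref{lem:goemans} is simply quoted from~\cite{Goemans:2009} with no proof given, so importing it as a black box is precisely what the paper does. Your sketch of the ellipsoidal/polymatroid argument is accurate extra context that goes beyond what the paper itself provides.
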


\begin{lemma}
\label{lem:budget-additive-weighted-concave}
Every submodular function $f:2^{[n]}\rightarrow \mathbb{Z}^{+}$  of the form $f(S)=g(\sum_{i \in S} a_{i})$, where $a_{i} \in \mathbb{Z}^{+}$ and $g$ is a non-negative, monotone, concave and integer valued on integral inputs, can be written as a sum of budgeted additive functions.
\end{lemma}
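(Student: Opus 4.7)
The plan is to reduce the claim to a one-dimensional decomposition of the concave function $g$. Specifically, I would express $g$ as a non-negative combination of ``cap'' functions of the form $x \mapsto \min(B, x)$; substituting $x = \sum_{i \in S} a_i$ then produces a decomposition of $f$ into budgeted additive pieces, after absorbing the scalar weight of each term into its budget and its item values.

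Concretely, let $A := \sum_{i=1}^{n} a_i$ and let $d_k := g(k) - g(k-1)$ for $k = 1, \ldots, A$. Monotonicity of $g$ gives $d_k \ge 0$, and the discrete concavity assumption gives $d_1 \ge d_2 \ge \cdots \ge d_A$. I would then set $\lambda_B := d_B - d_{B+1}$ for $1 \le B \le A-1$ and $\lambda_A := d_A$, all of which are non-negative by the above. A short telescoping argument, verifying that $\sum_{B \ge x} \lambda_B = d_x$ for $1 \le x \le A$ and that both sides vanish at $x=0$, shows that, under the normalization $g(0) = 0$,
\[
g(x) \;=\; \sum_{B=1}^{A} \lambda_B \min(B, x) \qquad \text{for every integer } x \in \{0, 1, \ldots, A\}.
\]

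Applying this identity at $x = \sum_{i \in S} a_i$ yields
\[
f(S) \;=\; \sum_{B=1}^{A} \lambda_B \min\!\Bigl(B,\ \sum_{i \in S} a_i\Bigr) \;=\; \sum_{B=1}^{A} \min\!\Bigl(\lambda_B B,\ \sum_{i \in S} \lambda_B a_i\Bigr),
\]
where in the second equality I pull the non-negative scalar $\lambda_B$ inside the $\min$. Each summand is then, by Definition~\ref{def:budgeted-additive}, a budgeted additive function with budget $\lambda_B B \ge 0$ and item weights $\lambda_B a_i \ge 0$. Dropping the non-contributing terms with $\lambda_B = 0$ writes $f$ as a sum of at most $A$ budgeted additive functions, as required.

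The only real obstacle is correct book-keeping between the marginal differences $d_k$ and the coefficients $\lambda_B$; discrete concavity of $g$ is exactly what forces the $\lambda_B$'s to be non-negative, and the rest of the identity is a one-line telescoping check. The one place to be careful is the implicit normalization $g(0) = 0$: otherwise $f(\emptyset) = g(0) > 0$, and a sum of budgeted additive functions always vanishes on the empty set, so $g(0) = 0$ is needed and is the natural regime for the lemma.
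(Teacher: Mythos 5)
Your proof is correct, and the key observation you flag about the implicit normalization $g(0)=0$ is a genuine one that the paper does not make explicit (it is forced because any sum of budgeted additive functions vanishes on $\emptyset$). Your route differs presentationally from the paper's. The paper lifts the problem to an auxiliary ground set of size $m=\sum_i a_i$, defines the \emph{uniform} submodular function $h(S)=g(|S|)$ on $2^{[m]}$, invokes Lemma~\ref{lem:uniform-matroid-and-concave} to write $h=\sum_i \alpha_i t_i$ as a non-negative combination of uniform matroid rank functions $t_i(S)=\min(i,|S|)$, and then pulls the decomposition back along disjoint blocks $A_i\subseteq[m]$ with $|A_i|=a_i$ to recover budgeted additive pieces on $[n]$. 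You bypass the lifted ground set entirely and work directly on the real line: you decompose the concave profile as $g(x)=\sum_B \lambda_B\min(B,x)$ with $\lambda_B$ the (non-negative, by concavity) second differences of $g$, then substitute $x=\sum_{i\in S}a_i$ and absorb each scalar into the budget and item weights. The underlying coefficients coincide---the paper's $\alpha_i$ are exactly your $\lambda_B$---so the two proofs are the same decomposition in different clothing. Yours is shorter and more self-contained here; the paper's indirection has the advantage of reusing Lemma~\ref{lem:uniform-matroid-and-concave}, which it wants anyway, and makes explicit the connection to matroid rank sums.
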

\begin{proof}
Let $m=\sum_{i=1}^{n}a_{i}$. Consider the function $h$, over the domain $[m]$, defined as $h(S)=g(\vert S \vert)$ for all $S \subseteq [m]$. Since $g(\cdot)$ is a non-negative, monotone, concave function, it is easy to verify that $h(\cdot)$ is a non-negative, monotone, submodular function. Construct $n$ mutually disjoint sets $A_{i} \subseteq [m]$ such that $\vert A_{i} \vert =a_{i}$.  Clearly, $\forall S \subseteq [n], f(S) = h(\cup_{i \in S}A_{i})$.

From Lemma~\ref{lem:uniform-matroid-and-concave}, we know that $h(\cdot)$ can be expressed as $\sum_{i=1}^{m}\alpha_{i} \cdot t_{i}(\cdot)$ where each $t_{i}$ is a uniform matroid rank function with rank $i$, over the domain $[m]$ and each $\alpha_{i}\ge0$.

This implies that for all $S\subseteq [n]$, $f(S) = \sum_{i=1}^{m}\alpha_{i} \cdot t_{i}(\cup_{i \in S}A_{i})$. Now, for every $i\in [m]$, construct the budget additive function $t'_{i}$, defined as $\forall S\subseteq [n], t_{i}(S)=\min\{\sum_{i\in S}v_{ij}, B_{i}\}$, where for $j\in [n]$, the value $v_{ij}=a_{j}$ and the budget $B_{i}$ is $i$. Since $t_{i}$ is a uniform matroid rank function of rank $i$ and $A_{i}$'s are mutually disjoint, we have for all $i\in [m]$,
\begin{equation}
\forall S \subseteq [n], t'_{i}(S) = t_{i}(\cup_{i \in S}A_{i})
\end{equation}

Therefore, we get for all sets $S \subseteq [n]$, $f(S) = \sum_{i=1}^{m}t'_{i}(S)$. 
\end{proof}

\subsection{Lower Bound}
For the lower bound, we first show that sum of coverage functions cannot approximate the class of monotone, submodular functions well, and then use Theorem~\ref{thrm:Bugdet_by_Coverage}, to infer that, therefore, even the class of sum of budgeted additive functions cannot approximate a monotone submodular function well.

\begin{lemma}
\label{lem:coverage-cannot-appx-montone-well}
The class of sum of coverage functions cannot approximate every non-negative monotone submodular function to a factor within $o\left( \frac{n^{1/3}}{\log^2{n}}\right)$.
\end{lemma}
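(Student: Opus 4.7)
The plan is to derive Lemma~\ref{lem:coverage-cannot-appx-montone-well} by composition with two known results. The argument will be by contradiction: if sums of coverage functions $\theta$-approximated every non-negative monotone submodular function for some $\theta = o(n^{1/3}/\log^2 n)$, then composing with a polynomial-size sketch for coverage functions would yield polynomial-size representations of monotone submodular functions achieving the same approximation factor, violating the known sketching lower bound implicit in~\cite{Balcan:2010,Badanidiyuru:2012}.

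My first step is to observe that a sum of coverage functions is itself a coverage function. Given coverage functions $g_1,\ldots,g_k$ with auxiliary tuples $(Z_i, w_i, \{A_{ij}\}_{j\in U})$, the tuple with ground set $\bigsqcup_i Z_i$, weight function agreeing with $w_i$ on each $Z_i$, and sets $\bigsqcup_i A_{ij}$ realizes $\sum_i g_i$ exactly per Definition~\ref{def:coverage}. Hence it suffices to rule out every-point $o(n^{1/3}/\log^2 n)$-approximation of monotone submodular functions by a \emph{single} coverage function.

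Next I would assemble two imported facts. (a) Badanidiyuru et al.~\cite{Badanidiyuru:2012} establish that every coverage function $g$ on an $n$-element ground set admits, for any $\varepsilon > 0$, a sketch $\tilde g$ representable in $\mathrm{poly}(n, 1/\varepsilon)$ bits with $\tilde g(S) \le g(S) \le (1+\varepsilon)\tilde g(S)$ for all $S$. (b) The combination of Balcan-Harvey's learning lower bound with the sketching-to-learning reduction due to Badanidiyuru et al.\ yields a family $\{f_n\}$ of non-negative monotone submodular functions such that no function $h$ representable in $\mathrm{poly}(n)$ bits can satisfy $f_n(S) \le h(S) \le \gamma\, f_n(S)$ simultaneously for every $S$ whenever $\gamma = o(n^{1/3}/\log^2 n)$. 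The closing contradiction is then immediate: assuming the hypothesized $\theta$-approximation, let $g_n$ be the coverage function $\theta$-approximating $f_n$, let $\tilde g_n$ be its $(1+\varepsilon)$-sketch from (a), and set $h_n := (1+\varepsilon)\tilde g_n$. Then $h_n$ is representable in polynomial bits and satisfies $f_n(S) \le h_n(S) \le (1+\varepsilon)\theta\cdot f_n(S)$ for every $S$, contradicting (b) once $\varepsilon$ is fixed to a small constant.

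The main obstacle is the precise formulation of (b). Balcan-Harvey's bound is phrased in the PMAC learning model, where the learner is allowed arbitrary error on a small-measure ``bad'' subset of the domain, whereas the composition above requires an \emph{every-point} one-sided approximation lower bound against polynomial-size representations. Fortunately any polynomial-size every-point sketch immediately yields a polynomial-sample PMAC learner by simply outputting the sketch on a polynomial training sample, so the PMAC lower bound transfers to the sketching lower bound needed here, with only a $\mathrm{polylog}(n)$ loss in the quantitative bound that is absorbed into the $\log^2 n$ factor of $\Omega(n^{1/3}/\log^2 n)$. I will make this reduction explicit in the write-up to justify invoking (b) in its every-point form.
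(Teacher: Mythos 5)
Your overall skeleton matches the paper's: assume a $\theta$-approximation by sums of coverage functions, compose with the Badanidiyuru--Singer result that coverage functions admit arbitrarily good $\mathrm{poly}(n,1/\varepsilon)$-bit sketches, and derive a contradiction with a representation lower bound for monotone submodular functions inherited from Balcan--Harvey. Your opening observation that a sum of coverage functions is itself a coverage function (take the disjoint union of auxiliary ground sets) is correct and worth stating explicitly; the paper glosses over it.

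The gap is in step (b). You state it as a known fact and then, recognizing it is not literally what Balcan--Harvey prove, propose to recover it from the PMAC lower bound via the reduction ``any polynomial-size every-point sketch immediately yields a polynomial-sample PMAC learner by simply outputting the sketch on a polynomial training sample.'' That reduction does not work as written: a sketch is a data structure built from full access to the function, not something a learner can simply output from a training sample, and the implication you need runs in the direction of \emph{existence of small representations} $\Rightarrow$ \emph{small effective hypothesis class} $\Rightarrow$ \emph{too few distinguishable functions}, which is a counting argument rather than a learner construction. The paper sidesteps the PMAC model entirely by invoking the explicit combinatorial family underlying Balcan--Harvey's proof (the paper's Lemma~\ref{lem:balcan-harvey}): for $k = 2^{\log^2 n}$ there is a collection $\mathcal{A}$ of $k$ sets and $2^k$ matroids whose rank functions take value $8\log k = 8\log^2 n$ or $n^{1/3}$ on each $S \in \mathcal{A}$ according to membership of $S$ in a designated $B\subseteq\mathcal{A}$, so any two distinct matroids in the family differ by a factor $\Omega(n^{1/3}/\log^2 n)$ at some point of $\mathcal{A}$. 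There are $2^{2^{\log^2 n}}$ such matroids but only $2^{\mathrm{poly}(n,1/\varepsilon)}$ possible coverage sketches, so by pigeonhole two matroids $B \neq B'$ share a sketch $h$, and $h$ must then fail to approximate at least one of them to within $o(n^{1/3}/\log^2 n)$. Replacing your PMAC detour with this explicit family plus pigeonhole closes the gap and makes the argument precise; the rest of your composition then goes through.
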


An easy corollary of Lemma~\ref{lem:coverage-cannot-appx-montone-well} that follows from Theorem~\ref{thrm:Bugdet_by_Coverage} is the following.
\begin{corollary}
\label{lem:budget-cannot-appx-montone-well}
The class of sum of budgeted additive functions cannot approximate every non-negative monotone submodular function to a factor within $o \left( \frac{n^{1/3}}{\log^2{n}}\right)$.
\end{corollary}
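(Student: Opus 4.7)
The plan is a short composition argument that reduces the statement to Lemma~\ref{lem:coverage-cannot-appx-montone-well} using the upper bound of Theorem~\ref{thrm:Bugdet_by_Coverage}. The key structural observation is that coverage functions are closed under addition: given coverage functions $g_1,\ldots,g_m$ with auxiliary ground sets $Z_1,\ldots,Z_m$ (disjointified if necessary), weights $w_i$, and set systems $\{A^{(i)}_j\}_{j\in U}$, the function $\sum_i g_i$ is itself a coverage function with auxiliary ground set $\bigsqcup_i Z_i$, weights inherited from the $w_i$, and set systems $\{\bigsqcup_i A^{(i)}_j\}_{j\in U}$. This is immediate from Definition~\ref{def:coverage}.

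I would argue by contraposition. Suppose, for the sake of contradiction, that the class of sums of budgeted additive functions $\alpha(n)$-approximates every non-negative monotone submodular function with $\alpha(n)=o\!\left(n^{1/3}/\log^2 n\right)$. Then for each such $f:2^U\rightarrow \RR_+$ there exists a decomposition $h=\sum_{i=1}^m h_i$, with every $h_i$ budgeted additive, satisfying
\[
f(S)\ \le\ h(S)\ \le\ \alpha(n)\cdot f(S)\qquad \forall S\subseteq U.
\]
Apply the upper bound of Theorem~\ref{thrm:Bugdet_by_Coverage} to each summand to obtain coverage functions $g_i$ with $h_i(S)\le g_i(S)\le \frac{e}{e-1}\,h_i(S)$ for all $S$. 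Setting $g=\sum_{i=1}^m g_i$ and using closure of coverage functions under summation, $g$ is itself a coverage function and satisfies $h(S)\le g(S)\le \frac{e}{e-1}\,h(S)$. Chaining the two sandwich bounds gives
\[
f(S)\ \le\ g(S)\ \le\ \tfrac{e}{e-1}\,\alpha(n)\cdot f(S)\qquad \forall S\subseteq U.
\]

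Thus (sums of) coverage functions would $\bigl(\tfrac{e}{e-1}\alpha(n)\bigr)$-approximate every non-negative monotone submodular function, and since $\tfrac{e}{e-1}$ is a constant, the overall factor is still $o(n^{1/3}/\log^2 n)$. This contradicts Lemma~\ref{lem:coverage-cannot-appx-montone-well}, completing the proof. The only subtle step is recognizing that Theorem~\ref{thrm:Bugdet_by_Coverage} must be applied termwise to a sum of budgeted additive functions (rather than to $h$ as a whole), but this is legitimate precisely because coverage functions form a convex cone under positive combinations; so I expect no real obstacle beyond assembling these two imported ingredients.
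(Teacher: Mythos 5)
Your proof is correct and fills in exactly the composition argument the paper gestures at when it calls this ``an easy corollary of Lemma~\ref{lem:coverage-cannot-appx-montone-well} that follows from Theorem~\ref{thrm:Bugdet_by_Coverage}''; applying Theorem~\ref{thrm:Bugdet_by_Coverage} termwise and using closure of coverage functions under positive sums is precisely the intended route.
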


We now present the proof of Lemma~\ref{lem:coverage-cannot-appx-montone-well}. We will need to use results from~\cite{Badanidiyuru:2012} and~\cite{Balcan:2010}, for which we first present a definition.

\begin{definition}
A $\beta$-sketch of a function $f:2^{U}\rightarrow \mathbb{R}$ is a polynomially sized (in $\vert U \vert$ and $1/(1-\beta)$) representable function $g$ such that $\forall S \subseteq U$, $\beta \cdot f(S) \le g(S) \le f(S)$.
\end{definition}
The following result is from~\cite{Badanidiyuru:2012}.
\begin{lemma}\cite{Badanidiyuru:2012}
\label{lem:coverage-good-sketch}
Coverage functions allow from arbitrary well sketches i.e., for any $\epsilon >0$, there exists a $1-\epsilon$ sketch.
\end{lemma}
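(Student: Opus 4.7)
The plan is to build the sketch by starting from the alternative representation of coverage functions given by Lemma~\ref{lem:charac-coverage}: any coverage function can be written as $f(S)=\sum_{T\subseteq U}x_T\,\mathbb{1}[T\cap S\neq\emptyset]$ with $x_T\geq 0$. Equivalently, grouping the auxiliary ground-set elements of Definition~\ref{def:coverage} by their membership pattern $\{i:z\in A_i\}$, we may assume without loss of generality that each ``type'' $T\subseteq U$ corresponds to at most one aggregated auxiliary element of weight $x_T$. The obstacle is that there can be up to $2^n$ nonzero $x_T$'s, whereas a sketch must have size $\mathrm{poly}(n,1/(1-\beta))$; I would attack this via a heavy/light split followed by importance sampling within the light part.

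Concretely, let $M=f(U)=\sum_T x_T$ and fix a threshold $\tau=\varepsilon M/n^{c}$ for a constant $c$ to be chosen. Declare $T$ \emph{heavy} if $x_T\geq\tau$ and \emph{light} otherwise. There are at most $n^c/\varepsilon$ heavy types, so they can be stored exactly in the sketch. For the light part, draw $m=\mathrm{poly}(n/\varepsilon)$ independent samples $T_1,\dots,T_m$ from the light types with probability proportional to $x_T$, and in the sketch replace the light contribution by these samples, each carrying weight $M_L/m$ (where $M_L$ is the total light mass). The resulting function $\widetilde g(S)$ is an unbiased estimator of $f(S)$ for every $S$, and a final rescaling $g:=(1-\varepsilon/2)\widetilde g$ will be used to enforce the one-sided bound $g(S)\le f(S)$ whenever the concentration succeeds.

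For the analysis, I would use Bernstein's inequality on the light-mass estimator: since each light sample contributes at most $\tau$ to $\widetilde g(S)$ and the variance of the light-mass estimator is bounded by $\tau\cdot f(S)$, a sample size $m=\Theta(n^{c+2}/\varepsilon^{2})$ yields, for any fixed $S$, an additive error of at most $(\varepsilon/4)f(S)+\varepsilon M/n^{c}$ with probability at least $1-2^{-2n}$. A union bound over the $2^n$ subsets $S$ then gives the same guarantee uniformly in $S$. Combined with the exact storage of heavy types (which, crucially, captures all the mass that can make $f(S)$ small relative to $M$), this converts the additive guarantee into the required multiplicative $(1-\varepsilon)$-approximation.

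The main obstacle is exactly this union bound over exponentially many sets: a naive Chernoff bound on the light-mass estimator would require sample size scaling with $M/f(S)$, which may be exponential. The heavy/light split is what resolves this: heavy types, stored exactly, guarantee that whenever $f(S)$ is small relative to $M$ the error from the sampled light part is at most $\varepsilon M/n^c$ and hence at most $\varepsilon f(S)$ on all relevant $S$; for larger $f(S)$, Bernstein gives multiplicative concentration directly. Choosing $c$ slightly larger than the exponent implicit in ``polynomial'' and tracking constants through the rounding $g=(1-\varepsilon/2)\widetilde g$ yields the claimed $(1-\varepsilon)$-sketch of polynomial size.
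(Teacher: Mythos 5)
First, note that the paper does not actually prove this lemma: it is quoted verbatim from~\cite{Badanidiyuru:2012}, so there is no in-paper proof to compare against, and your argument has to stand on its own. As written, it does not: the step that converts the additive error into a multiplicative $(1-\epsilon)$ guarantee is a genuine gap. Your concentration bound gives, for each $S$, an error of at most $(\epsilon/4)f(S)+\epsilon M/n^{c}$, and you claim the heavy/light split ensures the second term is harmless because heavy types ``capture all the mass that can make $f(S)$ small relative to $M$.'' That is false. A coverage function can have an item $i$ whose covering mass $f(\{i\})$ is arbitrarily small compared to $M=f(U)$ (say $f(\{i\})=2^{-n}M$), and then \emph{every} type containing $i$ is light, with total mass far below your threshold $\tau=\epsilon M/n^{c}$ and far below the sampling quantum $M_L/m$. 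With high probability none of your $m$ samples hits such a type, so the sketch assigns $g(\{i\})=0$, violating the lower bound $(1-\epsilon)f(\{i\})\le g(\{i\})$; and in the unlikely event a sample does hit it, the estimate jumps to $M_L/m\gg f(\{i\})$, violating the upper bound $g\le f$. In short, a single global threshold measured against $M$ cannot handle the fact that $f$ lives at exponentially many different scales across different sets $S$, and the additive slack $\epsilon M/n^{c}$ is not $O(\epsilon f(S))$ for the small-value sets. (A secondary slip: each light sample contributes $M_L/m$, not $\tau$, to $\widetilde g(S)$; this one is repairable by your choice of $m$, unlike the scaling issue.)

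The known proof in~\cite{Badanidiyuru:2012} starts from the same representation as your first paragraph (equivalently Lemma~\ref{lem:charac-coverage}) but performs the sampling/truncation \emph{relative to the scale of the items a type covers}, not relative to the global mass $M$: an auxiliary element is retained (or sampled) with probability governed by its weight compared to the singleton values $f(\{i\})$ of the items covering it, so that for every $S$ the discarded or unsampled mass is at most an $\epsilon$ fraction of $f(S)$ itself. If you want to salvage your outline, you need some such multi-scale normalization --- e.g.\ bucket types by $\min_{i\in T}f(\{i\})$ and do a heavy/light split and sampling separately within each of the $O(\log(\text{ratio}))$ or polynomially many relevant scales, or reweight the sampling probabilities by $x_T/\min_{i\in T}f(\{i\})$ --- before the Bernstein-plus-union-bound machinery (which is otherwise fine) can deliver a multiplicative guarantee simultaneously for all $2^{n}$ sets.
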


The following result is from~\cite{Balcan:2010}. It gives a `large' family of matroid rank functions, such that any two functions in the class have at least one point where the values that they take differ by a `significant' factor.
\begin{lemma}\cite{Balcan:2010}
\label{lem:balcan-harvey}
For any $k=2^{o(n^{1/3})}$, there exists a family of sets $\mathcal{A}\subseteq 2^{[n]}$ with $\vert \mathcal{A} \vert=k$ and a family of matroids $\mathcal{M}=\{M_{B}\vert B \subseteq \mathcal{A}\}$ such that for all $B \subseteq \mathcal{A}$, it is the case that
\begin{align}
\forall S \in \mathcal{A}, r_{M_{B}}(S) = \left\{
\begin{array}{ll}
8 \log k & \text{if } S \in B\\
n^{1/3} & \text{if }  S \notin B
\end{array}
\right.
\end{align}
where $r_{M_{B}}$ is the rank function of the matroid $M_{B}$
\end{lemma}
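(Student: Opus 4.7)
My plan follows Balcan and Harvey. I split the proof into three pieces: a probabilistic choice of the family $\mathcal{A}$ with near-disjoint elements, a matroid construction $B\mapsto M_B$ that realizes prescribed capacities associated with $B$ as a bona fide matroid, and a short rank calculation.

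For the first piece, I would take each $A_i$ independently and uniformly at random from the $n^{1/3}$-subsets of $[n]$. The pairwise intersection $|A_i\cap A_j|$ is hypergeometric with mean $n^{-1/3}$, and the elementary tail bound $\Pr[|A_i\cap A_j|\geq 4\log k]\leq \binom{n^{1/3}}{4\log k}(n^{1/3}/n)^{4\log k}\leq n^{-(4/3)\log k}$ combined with a union bound over $\binom{k}{2}$ pairs gives total failure probability $o(1)$ whenever $k=2^{o(n^{1/3})}$. Fix any realization in which all pairwise intersections are at most $4\log k$; this is $\mathcal{A}$.

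For the second piece, the goal is, for each $B\subseteq\mathcal{A}$, to produce a matroid $M_B$ in which the rank of every $A\in B$ equals $8\log k$ and the rank of every $A\in\mathcal{A}\setminus B$ equals $n^{1/3}$. The naive candidate---declare $I$ independent iff $|I|\leq n^{1/3}$ and $|I\cap A|\leq 8\log k$ for all $A\in B$---is \emph{not} a matroid whenever two sets in $B$ overlap (a three-element example with $A_1=\{1,2\}$, $A_2=\{2,3\}$ already breaks the exchange axiom). Instead, the capacities have to be encoded through a matroid operation that is guaranteed to yield a matroid (matroid union on carefully chosen rank-$8\log k$ matroids, a transversal matroid on an auxiliary bipartite graph with $8\log k$ slots per constrained set, or a direct submodular rank construction), followed by truncation at rank $n^{1/3}$. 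The near-disjointness from Step 1, with logarithmic slack $4\log k$ between per-set capacity $8\log k$ and maximum pairwise intersection, is what lets the operation land on exactly the desired rank values.

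Granting that $M_B$ realizes the capacity semantics, the rank calculations in the third piece are immediate. For $A\in B$, any $I\subseteq A$ with $|I|=8\log k$ is independent: $|I|\leq n^{1/3}$ trivially, and for every other $A'\in B$, $|I\cap A'|\leq |A\cap A'|\leq 4\log k<8\log k$; hence $r_{M_B}(A)=8\log k$. For $A\in\mathcal{A}\setminus B$, the whole set $A$ is independent: $|A|=n^{1/3}$ meets the truncation bound and $|A\cap A'|\leq 4\log k\leq 8\log k$ for every $A'\in B$; hence $r_{M_B}(A)=n^{1/3}$. The main obstacle is the middle step---turning the capacity constraints into a genuine matroid in the presence of overlap---and the critical quantitative ingredient is the logarithmic slack between $8\log k$ and $4\log k$ delivered by the random construction.
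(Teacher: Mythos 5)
This lemma is imported by the paper directly from Balcan and Harvey \cite{Balcan:2010}; the paper itself offers no proof, so your attempt has to stand on its own as a reconstruction of their argument --- and as written it does not. The genuine gap is your second step, which is the entire content of the lemma. You correctly observe that the naive family $\{I : |I|\le n^{1/3},\ |I\cap A|\le 8\log k\ \forall A\in B\}$ violates the exchange axiom once sets in $B$ overlap, but you then write ``Granting that $M_B$ realizes the capacity semantics'' and move on. None of the three mechanisms you gesture at delivers the stated rank values without substantial further work: a matroid union of rank-$(8\log k)$ matroids supported on the sets of $B$ gives rank $0$ (or, after adding a free part and truncating, rank strictly larger than $8\log k$) on the wrong sets; a transversal matroid with $8\log k$ slots per constrained set lets a subset of $A\in B$ absorb slots belonging to overlapping $A'\in B$, so $r(A)$ can exceed $8\log k$ by roughly $\sum_{A'\neq A}|A\cap A'|$; and the ``direct submodular rank construction'' is precisely the nontrivial theorem Balcan--Harvey prove, in which independence is defined by constraints on $\bigl|I\cap\bigcup_{j\in J}A_j\bigr|$ over \emph{subfamilies} $J\subseteq B$, submodularity of the capacity function must be verified, and the rank values are extracted from an Edmonds-type formula $r(S)=\min_T\bigl(f(T)+|S\setminus T|\bigr)$. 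That verification needs control of union sizes $\bigl|\bigcup_{j\in J}A_j\bigr|$ for subfamilies, not only the pairwise bound $|A_i\cap A_j|\le 4\log k$ you establish in step one (pairwise bounds imply useful union bounds only for subfamilies of size $O(n^{1/3}/\log k)$, and one must argue the remaining constraints are slack).

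A consequent problem is that your third step computes ranks inside the naive structure you had just discarded: the claims ``any $I\subseteq A$ with $|I|=8\log k$ is independent'' and ``all of $A$ is independent for $A\notin B$'' are statements about the capacity semantics, not about whatever matroid the unspecified operation produces, so neither the upper bound $r_{M_B}(A)\le 8\log k$ for $A\in B$ nor the lower bound $r_{M_B}(A)\ge n^{1/3}$ for $A\notin B$ is actually established. Your first step (random $n^{1/3}$-sets, hypergeometric tail plus union bound over $\binom{k}{2}$ pairs, valid for $k=2^{o(n^{1/3})}$) is fine as far as it goes, and the ``$8\log k$ versus $4\log k$'' slack is indeed the right kind of quantitative cushion, but to have a proof you would need to state and prove the matroid construction lemma itself (or cite it precisely), together with the spread/expansion conditions on $\mathcal{A}$ that it requires.
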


\begin{proof}[of Lemma~\ref{lem:coverage-cannot-appx-montone-well}]
Let the class of matroid rank functions on the domain of size $n$ be $\alpha$-approximable by coverage functions, for some $\alpha$. That is, for a domain $[n]$, for all matroid rank function $r$, there exists a coverage function $g$ such that $\forall S \subseteq [n], r(S) \le g(S) \le \alpha \cdot r(S)$.

By Lemma~\ref{lem:coverage-good-sketch}, for every $\epsilon>0$ and every coverage function $g$, there exists a polynomially sized (polynomial in $n$ and $1/\epsilon$) representable function $h$ such that $\forall S\subseteq [n], (1-\epsilon)\cdot g(S) \le h(S) \le g(S)$. Hence, for all $\epsilon >0$ and for all matroid rank functions $r$, there exists a polynomial sized representable function $h$ such that $\forall S \subseteq [n], r(S) \le h(S)/(1-\epsilon) \le \frac{\alpha}{1-\epsilon} \cdot r(S)$. For any given $\epsilon>0$, there are only $2^{O(n,1/\epsilon)}$ many different $h$ functions.

From Lemma~\ref{lem:balcan-harvey}, for $k=2^{\log^{2}(n)}$, there exists family of sets $\mathcal{A} \subseteq 2^{[n]}$ with $\vert \mathcal{A}\vert=k$, and a $2^{k}$ sized matroid family $\mathcal{M}_{B}$ such that for all sets $A \in \mathcal{A}$ and $\forall B \subseteq \mathcal{A}$,
\begin{align}
\forall S \in \mathcal{A}, r_{M_{B}}(S) = \left\{
\begin{array}{ll}
8 \log^{2}n & \text{if } S \in B\\
n^{1/3} & \text{if }  S \notin B
\end{array}
\right.
\end{align}

Now while the number of different $g'$ functions are $2^{O(n,1/\epsilon)}$, the number of different matroid rank functions in this family is $2^{n^{\log(n)}}$. Hence, by pigeon-hole principle, there must be two matroids $B$ and $B'$ ($B \ne B'$) such that the best coverage functions $g$ and $g'$ approximating $B$ and $B'$ respectively, have the same best polysized representation $h$. But since, for every set $S \in B \Delta B'$, $r_{M_{B}}$ and $r_{M_{B'}}$, differ by a factor of $\Omega(n^{1/3}/\log^{2}(n))$, therefore, $h$ cannot approximate at least one of these two to a factor better $\Omega(n^{1/3}/\log^{2}(n))$. Since the value of $g$ and $g'$ at any point in the domain is off from that of $h$ by at most $1-\epsilon$, and hence it follows that $\alpha=\Omega(n^{1/3}/\log^{2}(n))$. 
\end{proof}

\end{document}